\PassOptionsToPackage{unicode}{hyperref}
\PassOptionsToPackage{hyphens}{url}
\PassOptionsToPackage{dvipsnames,svgnames,x11names}{xcolor}

\documentclass[12pt]{article}
\usepackage{mathtools}
\usepackage{amsmath, amssymb, amsthm}
\usepackage{natbib}
\usepackage{subfiles}
\usepackage{xr} 
\usepackage{hyperref} 
\usepackage[affil-it]{authblk} 
\usepackage{placeins}
\usepackage{color}
\usepackage{bm}
\usepackage{caption}
\usepackage{subcaption}
\usepackage{gensymb}
\usepackage{adjustbox}
\usepackage{multirow}
\usepackage{booktabs}
\usepackage{graphicx} 
\usepackage{array}
\usepackage{float}
\usepackage{iftex}
\ifPDFTeX
  \usepackage[T1]{fontenc}
  \usepackage[utf8]{inputenc}
  \usepackage{textcomp} 
\else 
  \usepackage{unicode-math}
  \defaultfontfeatures{Scale=MatchLowercase}
  \defaultfontfeatures[\rmfamily]{Ligatures=TeX,Scale=1}
\fi
\usepackage{lmodern}
\ifPDFTeX\else  
\fi
\IfFileExists{upquote.sty}{\usepackage{upquote}}{}
\IfFileExists{microtype.sty}{
  \usepackage[]{microtype}
  \UseMicrotypeSet[protrusion]{basicmath} 
}{}
\makeatletter
\@ifundefined{KOMAClassName}{
  \IfFileExists{parskip.sty}{%
    \usepackage{parskip}
  }{
    \setlength{\parindent}{0pt}
    \setlength{\parskip}{6pt plus 2pt minus 1pt}}
}{
  \KOMAoptions{parskip=half}}
\makeatother
\usepackage{xcolor}
\makeatletter
\ifx\paragraph\undefined\else
  \let\oldparagraph\paragraph
  \renewcommand{\paragraph}{
    \@ifstar
      \xxxParagraphStar
      \xxxParagraphNoStar
  }
  \newcommand{\xxxParagraphStar}[1]{\oldparagraph*{#1}\mbox{}}
  \newcommand{\xxxParagraphNoStar}[1]{\oldparagraph{#1}\mbox{}}
\fi
\ifx\subparagraph\undefined\else
  \let\oldsubparagraph\subparagraph
  \renewcommand{\subparagraph}{
    \@ifstar
      \xxxSubParagraphStar
      \xxxSubParagraphNoStar
  }
  \newcommand{\xxxSubParagraphStar}[1]{\oldsubparagraph*{#1}\mbox{}}
  \newcommand{\xxxSubParagraphNoStar}[1]{\oldsubparagraph{#1}\mbox{}}
\fi
\makeatother

\usepackage{longtable,booktabs,array}
\usepackage{calc} 
\usepackage{etoolbox}
\makeatletter
\patchcmd\longtable{\par}{\if@noskipsec\mbox{}\fi\par}{}{}
\makeatother
\IfFileExists{footnotehyper.sty}{\usepackage{footnotehyper}}{\usepackage{footnote}}
\makesavenoteenv{longtable}
\usepackage{graphicx}
\makeatletter
\def\maxwidth{\ifdim\Gin@nat@width>\linewidth\linewidth\else\Gin@nat@width\fi}
\def\maxheight{\ifdim\Gin@nat@height>\textheight\textheight\else\Gin@nat@height\fi}
\makeatother
\setkeys{Gin}{width=\maxwidth,height=\maxheight,keepaspectratio}
\makeatletter
\def\fps@figure{htbp}
\makeatother

\makeatletter
\@ifpackageloaded{caption}{}{\usepackage{caption}}
\AtBeginDocument{%
\ifdefined\contentsname
  \renewcommand*\contentsname{Table of contents}
\else
  \newcommand\contentsname{Table of contents}
\fi
\ifdefined\listfigurename
  \renewcommand*\listfigurename{List of Figures}
\else
  \newcommand\listfigurename{List of Figures}
\fi
\ifdefined\listtablename
  \renewcommand*\listtablename{List of Tables}
\else
  \newcommand\listtablename{List of Tables}
\fi
\ifdefined\figurename
  \renewcommand*\figurename{Figure}
\else
  \newcommand\figurename{Figure}
\fi
\ifdefined\tablename
  \renewcommand*\tablename{Table}
\else
  \newcommand\tablename{Table}
\fi
}
\@ifpackageloaded{float}{}{\usepackage{float}}
\floatstyle{ruled}
\@ifundefined{c@chapter}{\newfloat{codelisting}{h}{lop}}{\newfloat{codelisting}{h}{lop}[chapter]}
\floatname{codelisting}{Listing}

\makeatother
\makeatletter
\@ifpackageloaded{caption}{}{\usepackage{caption}}
\@ifpackageloaded{subcaption}{}{\usepackage{subcaption}}
\makeatother

\ifLuaTeX
  \usepackage{selnolig}  
\fi
\usepackage{bookmark}

\IfFileExists{xurl.sty}{\usepackage{xurl}}{} 
\hypersetup{
  pdftitle={Title},
  pdfauthor={Author 1; Author 2},
  pdfkeywords={3 to 6 keywords, that do not appear in the title},
  colorlinks=true,
  linkcolor={blue},
  filecolor={Maroon},
  citecolor={Blue},
  urlcolor={Blue},
  pdfcreator={LaTeX via pandoc}}

\theoremstyle{plain}

\newtheorem{proposition}{Proposition}

\theoremstyle{definition}
\newtheorem{assumption}{Assumption}
\newtheorem{remark}{Remark}

\newcommand{\anon}{1}

\begin{document}

\def\spacingset#1{\renewcommand{\baselinestretch}%
{#1}\small\normalsize} \spacingset{1}


\if1\anon
{ 
  \title{\bf A Bayesian Weakest-Link Framework for Joint Estimation of Material Strength and Stress Profile}
  \author{Shiyu He\thanks{Corresponding author:
    s35he@uwaterloo.ca}\hspace{.2cm}\\
    Department of Statistics and Actuarial Science, University of Waterloo\\
    and \\
    Samuel W.K. Wong\\
    Department of Statistics and Actuarial Science, University of Waterloo}
  \maketitle
} \fi

\if0\anon
{
  \bigskip
  \bigskip
  \bigskip
  \begin{center}
    {\LARGE\bf A Bayesian Weakest-Link Framework for Joint Estimation of Material Strength and Stress Profile}
\end{center}
  \medskip
} \fi

\bigskip
\begin{abstract}
For structural components whose failure is governed by the weakest-link theory, existing reliability models typically either assume that the underlying mechanical model is known or neglect to exploit the spatial information contained in observed failure locations. In practice, however, idealized mechanical models may systematically deviate from the actual stress due to simplifying or incorrect assumptions. To address this limitation, we propose a hierarchical Bayesian weakest-link model that jointly estimates the latent material strength and stress profile from paired failure load and failure zone observations. In our formulation, the stress profile is estimated via a B-spline basis expansion, and the non-differentiable weakest-link mechanism is approximated by a differentiable Softmin function to account for unobserved material flaws and facilitate Bayesian inference. Simulation studies demonstrate that the proposed framework provides accurate and robust estimation under various experiment configurations. Applied to a real-data analysis of Douglas-fir crossarms, the proposed model identifies systematic deviations from idealized beam theory that cannot be captured by deterministic stress derivations. 
\end{abstract}

\noindent%
{\it Keywords:} weakest-link theory; reliability analysis; structural components; load-bearing components; structural timber.
\vfill

\newpage
\spacingset{1.8} 

\section{Introduction}

In structural engineering, material heterogeneity, defects, and stress fields collectively contribute to the variability in the failure behavior of structural components. To characterize this underlying uncertainty, probabilistic models of failure are essential for reliability analysis and engineering practice. 
The weakest-link theory, proposed by \citet{peirce1926weakestlink} and formalized by \citet{weibull1939statistical}, posits that a system fails at the initiation of its most severe flaw rather than upon reaching a global mean strength threshold. That theory, and the corresponding Weibull distribution, has been applied to a broad range of materials, including timber \citep{madsen1986size}, ceramics \citep{petrovic1987weibull, lamon1988statistical}, glass \citep{beason1998basis}, carbon-fiber composites \citep{el1989estimation}, and concrete \citep{bavzant1991statistical}.

Statistical weakest-link models have found wide usage due to their applicability to both uniform and non-uniform stress fields via the effective volume approach \citep{batdorf1974statistical}.  
Under a uniform stress field, where the stress remains constant, the failure probability depends on the volume of the component and the magnitude of that constant stress. Under non-uniform stress fields, the effective volume is approximated by discretizing the structure into elements of approximately uniform stress. The overall failure probability is then the complement of the product of element-wise survival probabilities. Stochastic finite element methods extend this approach by propagating uncertainty in material properties into the stress calculations \citep{gutierrez2000stochastic, graham2003analysis}. In either case, the results obtained are highly dependent on having an accurate mechanical model for stress, as weakest-link failure predictions are sensitive to errors in the calculated stress fields. 
In practice, idealized mechanical models may systematically deviate from the actual effective stress due to factors such as uncertainty in material properties, simplifying assumptions regarding structural configuration and boundary conditions, and unmodeled local effects \citep{mottershead2011sensitivity, kiran2025state}. Moreover, material and stress heterogeneity are often coupled when observing failures: a failure observed in a low-stress peripheral zone provides substantially stronger evidence of extreme material weakness than a failure occurring at the peak-stress location. To decouple these sources of heterogeneity, a modeling framework is required for simultaneously recovering the latent material strength distribution and the actual stress profile from failure load and location data. Such a model must also explicitly incorporate the impact of observed material defects on local strength variation.

Bayesian methods offer a principled framework for combining prior engineering knowledge with experimental data to estimate and quantify uncertainty in model parameters. Early applications in this area include Bayesian inference of Weibull parameters \citep{soland1969bayesian} along with extensions for censored data and risk assessment \citep{siu1998bayesian, zhang2006bayesian}, and hierarchical treatments of failure time analysis \citep{hamada2008bayesian}. More recently, Bayesian approaches have been developed for strength prediction, which include the models proposed by \citet{wong2016quantifying} and \citet{fan2023knots} for lumber with observed material defects. However, \citet{wong2016quantifying} did not exploit information on failure location, and \citet{fan2023knots} considered only the case of uniaxial tension loading with uniform stress across the specimen. 
To our knowledge, no established method has been designed to jointly estimate the material strength distribution and the latent stress profile from failure data.

To address the limitations of existing approaches, we propose a hierarchical Bayesian weakest-link model that jointly recovers the latent material strength and stress profile from paired observations of failure load and failure location. Our model contributes three key innovations in this context. First, we decouple the latent material strength from the stress profile by modeling the log-strength as a linear mixed model with material defects as covariates and a specimen-level random effect, to capture both observed and unobserved sources of material variability. 
Second, we represent the latent stress profile using non-parametric B-splines subject to a sum-to-zero constraint. This flexible formulation allows the stress profile to be learned directly from observed failure locations without imposing a restrictive parametric form. Consequently, while priors can be informed by idealized mechanical theory, posteriors are free to adapt to deviations from that theory supported by the data.
Third, we approximate the non-differentiable weakest-link mechanism using a differentiable Softmin function. This formulation allows unobserved material flaws to contribute to failure and facilitates efficient gradient-based Bayesian inference. 
We show that our construction enables all of the parameters associated with the latent material strength and stress profile to be identifiable.
A real data application to Douglas-fir crossarms demonstrates that the proposed framework recovers systematic deviations from idealized beam theory, most notably elevated stresses due to pin-hole shear, that a fixed theoretical stress profile cannot capture.

The remainder of the paper is organized as follows. Section~\ref{sec:experiments} describes the experimental configuration of the crossarm tests and formalizes the underlying mechanical principles governing failure in structural components. Section~\ref{sec:method} introduces the hierarchical Bayesian model and establishes identifiability of all model parameters. In Section~\ref{sec:simulation}, we conduct a series of simulation studies to evaluate the model's ability to recover latent stress profiles under different loading configurations and data-generating mechanisms. Section~\ref{sec:realdata} analyzes the dataset of Douglas-fir crossarms and compares our results against benchmarks based on a deterministic stress profile. Finally, Section~\ref{sec:conclusion} provides concluding remarks and discusses future research directions.

\section{Background and Motivating Example}\label{sec:experiments}

\subsection{Douglas-fir Crossarm Dataset}

The experimental data that motivates this study comes from a large-scale destructive testing program of Douglas-fir crossarms \citep{anderson2021ability}. Crossarms are critical components of the electrical distribution grid whose failure can lead to significant service interruptions and public safety risks. The test setup mimics the real-world mechanical loading of utility crossarms using an overhanging cantilever configuration. Each specimen (measuring $3.75'' \times 4.75'' \times 96''$) was pinned to a steel beam at its center and one end, with an actuator applying a load at the free end. The load was applied at an angle of 17.5$^\circ$ to the horizontal axis of the arm, inducing biaxial bending. Spacers were integrated between the pinned supports to allow for free deflection during the loading process. Further details of the experiment and data can be found in \citet{anderson2019reassessment}.

Prior to testing, each specimen is partitioned into a series of non-overlapping longitudinal zones to localize fracture initiation (Figure~\ref{fig:zones}). During the experiment, both the failure load and the zone in which fracture initiates are recorded. These paired observations, i.e., of failure load and failure location, are the primary source of information exploited in our analysis. In addition, detailed measurements of knots, including their sizes and locations, were collected for each specimen.

\subsection{Material Strength and Flaws}

The latent strength of each zone is primarily governed by the stochastic distribution of material flaws or defects. In structural timber, these manifest as natural growth characteristics, most notably knots and grain deviations \citep{kretschmann2010mechanical}. Knots, formed from branch inclusions, force the primary wood fibers to deviate from a straight longitudinal path, thereby interrupting fiber continuity and inducing localized stress concentration \citep{cramer1983model, tang1984stress}. Research indicates that the impact of these defects on material strength depends on a complex interplay between their size, shape, and the local slope of grain \citep{cramer1988exploring, kretschmann2010mechanical}.

\begin{figure}[H]
    \centering
    \begin{subfigure}[b]{0.7\textwidth}
        \centering
        \includegraphics[trim={0 50 0 50},clip,width=.9\linewidth]{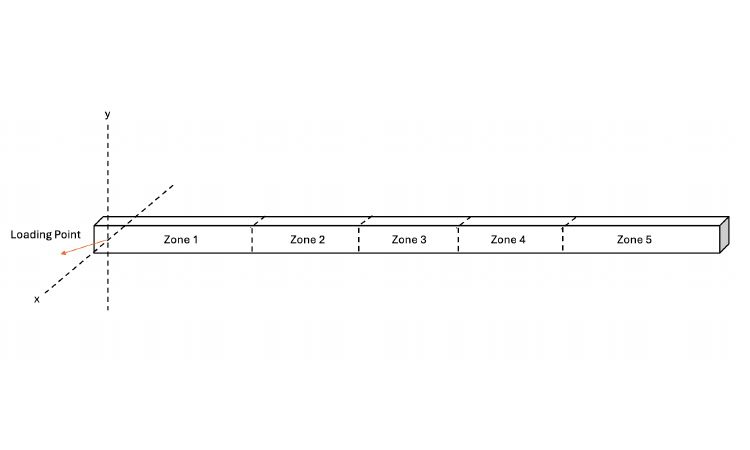}
        \caption{Discretization of zones from the overhanging cantilever configuration experiment.}
        \label{fig:zones}
    \end{subfigure}
    \begin{subfigure}[b]{0.45\textwidth}
        \centering
        \includegraphics[width=\linewidth]{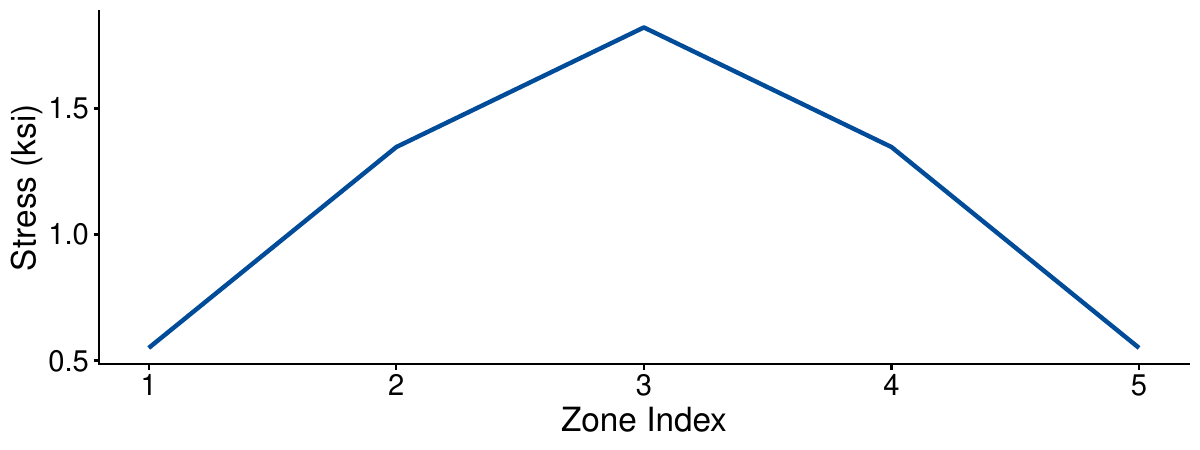}
        \caption{Theoretical tensile stress profile under the experimental loading configuration.}
        \label{fig:theory_dist}
    \end{subfigure}  
    \begin{subfigure}[b]{0.45\textwidth}
        \centering
        \includegraphics[width=\linewidth]{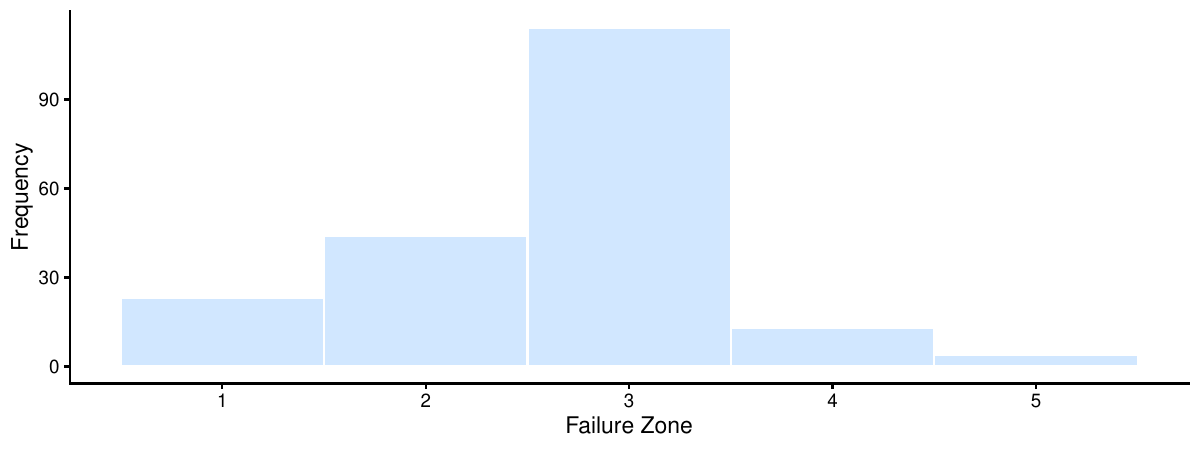}
        \caption{Distribution of failure zones observed in the experiment.}
        \label{fig:failure_zone_dist}
    \end{subfigure}  
    \caption{Motivating example of Douglas-fir crossarms under an overhanging cantilever configuration experimental setup.}
    \label{fig:motivating_example}
\end{figure}

Current industry grading standards \citep{wwpa2022, wclib2022} assess only the most prominent knot or knot cluster within a specimen, ignoring the collective spatial arrangement of defects across the beam's span \citep{wong2016quantifying}. \citet{fan2023knots} demonstrated that the spatial arrangement of knots across the zones of a specimen, rather than the characteristics of any single dominant knot, is a stronger predictor of localized failure. This directly motivates the zone-specific covariate structure adopted in Section~\ref{sec:method}, in which observed defects are incorporated separately for each longitudinal zone.

\subsection{Stress Profiles}\label{subsec:stress_profiles}

Material strength and flaws alone do not determine failure. A zone fails only when the stress induced by the external load exceeds its local strength. Consequently, accurate failure prediction requires both the probability distribution of material strength and the stress profile generated by the loading configuration. 

The simplest loading configuration is uniaxial tension, where the specimen is gripped at both ends and pulled axially, generating a uniform stress profile across the longitudinal axis. In this scenario every zone experiences identical internal stress and, in the absence of material flaws, every zone would have an equal probability of failure. However, most real-world structural components do not experience uniform stress. Bending stresses are a common example of non-uniform stress: the bending moment varies along the span, and hence the stress varies substantially along the longitudinal axis of the beam \citep{boresi2002advanced, courtney2005mechanical}. For timber crossarms subjected to an overhanging cantilever loading configuration, the bending stress reaches its maximum near the center support and decreases toward both the end support and the free end of the beam (Figure~\ref{fig:theory_dist}; see Section \ref{sec:stress_analysis} of the supplementary material for the derivation based on classical beam theory).
 
Classical beam theory provides a nominal stress profile for this loading configuration, but systematic discrepancies between the theoretical and effective stress fields commonly arise in practice. Factors such as slight misalignment of the support pins, imperfect boundary conditions, and unmodeled shear stresses cause the true internal stress field to deviate from the theoretical profile. As noted by \cite{anderson2021ability}, the physical pinning mechanism introduces localized stress concentrations around the pin holes, leading to a notable number of shear failures in Zone 1  (Figure~\ref{fig:failure_zone_dist}). Such failures are not predicted by the theoretical tensile stress profile, and the discrepancy is particularly difficult to characterize when these stress perturbations interact with material flaws. Rather than assuming the theoretical stress profile is exact, this motivates us to treat the effective stress profile as unknown and to estimate it jointly with the latent strength of each zone, given the observed failure loads and failure locations.

While our motivating example is based on structural timber, the framework we propose is applicable when each of the following holds in the experimental setup: (i) the system can be partitioned into a series of discrete longitudinal zones, (ii) failure is governed by the weakest-link theory, and (iii) the experiment records both a failure load and a failure location.

\section{Methodological Framework}\label{sec:method}

\subsection{The Strength Model}\label{sec:base_strength}

Consider a specimen partitioned into a sequence of $r$ \footnote{The number of zones $r$ is an experimental/modeling choice that balances spatial resolution against statistical precision: finer discretization provides more spatial information about the stress profile but reduces the number of observed failures per zone, potentially leading to sparse data and unstable estimates, while coarser discretization may obscure meaningful variation in the stress profile.  In practice, $r$ should be chosen to reflect the spatial scale of the expected stress variation. } discrete longitudinal zones. Let $\boldsymbol{y}_i = (y_{i1}, \dots, y_{ir})^\top \in \mathbb{R}^{r}$ represent the latent material strength vector for the $i$-th specimen, where each element $y_{ij}$ corresponds to the localized strength of the $j$-th zone. Under idealized, defect-free conditions, the material strength is assumed to be homogeneous across all zones. For each specimen $i \in [n]$ and zone $j \in [r]$, we begin by defining the baseline latent log-strength as:
\begin{equation}\label{eq:strength0}
    \log(y_{ij}^{(0)}) = \mu + \theta^{(0)}_i,
\end{equation}
where $\mu$ is the global mean 
and $\theta^{(0)}_i \sim \mathcal{N}(0, \sigma^2_{\theta^{(0)}})$ is a random effect capturing inter-specimen variability among defect-free specimens. For timber, this variability might be attributable to inherent physiological factors, such as moisture content, dry density, and growth-ring orientation.

To account for material flaws that may weaken the system, let $\mathbf{X}_{ij}$ be a vector of observed covariates (e.g., the measured characteristics of knots in timber), and $\theta^{(1)}_i$ be a random effect to account for unmeasured or latent flaws affecting the overall strength of the specimen. The defect-augmented latent strength is modeled as:
\begin{equation}\label{eq:strength1}
\log(y_{ij}) = \log(y_{ij}^{(0)}) + \mathbf{X}^\top_{ij}\boldsymbol{\beta} + \theta^{(1)}_i, \end{equation} 
where $\boldsymbol{\beta}$ is a vector of strength-reduction coefficients. 
Substituting \eqref{eq:strength0} into \eqref{eq:strength1} yields the log-linear mixed model:
\begin{equation}
\log(y_{ij}) = \mu + \mathbf{X}^\top_{ij}\boldsymbol{\beta} + \theta_i, \label{eq:full_strength_model}
\end{equation} where $\theta_i = \theta^{(0)}_i + \theta^{(1)}_i$ represents the total inter-specimen variation. In the absence of external validation data from defect-free specimens, the variance components $\sigma^2_{\theta^{(0)}}$ and $\sigma^2_{\theta^{(1)}}$ are not separately identifiable. Therefore, we define a composite random effect $\theta_i \sim \mathcal{N}(0, \sigma^2_{\theta})$, where $\sigma^2_{\theta} = \sigma^2_{\theta^{(0)}} + \sigma^2_{\theta^{(1)}}$, aggregating all sources of inter-specimen heterogeneity into a single stochastic term.

\subsection{Linking the Strength and Stress Profiles}

Since the latent stress profile is often non-uniform (recall Section \ref{subsec:stress_profiles}), 
two identical zones may exhibit different failure probabilities solely due to differences in the stress they experience. To link the strength and stress profiles, we introduce a vector of scaling factors, $\boldsymbol{\psi} = (\psi_1, \dots, \psi_r)^\top$, common to all specimens under the same loading configuration. This vector normalizes the latent strength of each zone to enable direct comparison across zones. 
The resulting adjusted strength vector for specimen $i$ is then denoted as $\boldsymbol{z}_i = (z_{i1}, \dots, z_{ir})^\top \in \mathbb{R}^{r}$, with $z_{ij} = y_{ij} / \psi_j$, where each $\psi_j$ acts as an adjustment factor for zone $j \in [r]$. In this formulation, high values of $\psi_j$ correspond to zones under high stress, where a higher realized failure load $y_{ij}$ is required to reflect the same adjusted strength. Conversely, in low-stress zones (low $\psi_j$), failure occurs only when the latent strength is low enough to be triggered by a minimal load, meaning that these zones typically survive. Log-transforming this relationship recovers the form of a log-linear additive model:
\begin{equation}\label{eq:int_logcap}
\log(z_{ij}) = \log(y_{ij}) - \log(\psi_j) = \mu + \mathbf{X}^\top_{ij}\boldsymbol{\beta} + \theta_i - \log(\psi_j),
\end{equation}
which partitions the adjusted log-strength into four fixed and random effect components.

\subsection{Representing Stress Profiles via B-Spline Basis Expansion}

Rather than imposing a parametric functional form on the zone-level stress values, we employ a B-spline basis expansion to provide a flexible representation of $\log(\boldsymbol{\psi})$ across the $r$ zones:
\begin{equation}\label{eq:basis}
    \log(\psi_j) = \sum_{g=1}^{G}\mathbf{B}_{jg}\gamma_{g},
\end{equation}
where $\mathbf{B}$ is an $r \times G$ basis matrix, with the $j$-th row consisting of the evaluations of the $G$ basis functions at the location of zone $j$. The number of basis functions is given by $G = m + d + 1$, where $m \geq 0$ is the number of internal knots and $d \geq 0$ is the spline degree, with the associated basis coefficients represented by the vector $\boldsymbol{\gamma} = [\gamma_1, \dots, \gamma_G]^\top \in \mathbf{R}^G$.
The B-spline parameterization is flexible enough to capture a wide range of stress profiles, and yields a smooth continuous curve in the limit as $r \to \infty$ to match engineering intuition.

To ensure the model remain identifiable, we impose a sum-to-zero constraint on the log-scaling factors: $\sum_{j \in [r]} \log(\psi_j) = 0$. In the absence of this constraint, an arbitrary constant could be added to $\mu$ and subtracted from $\log(\psi_j)$ without changing the likelihood. Under this constraint, $\log(\boldsymbol{\psi})$ can be interpreted as log-normalized stress, facilitating a direct and unitless comparison with the theoretical stress profile derived from mechanical theory.

\subsection{Approximating the Weakest-Link Mechanism}\label{sec:failure_zone_lik}

Let $f_i \in [r]$ denote the observed failure zone for specimen $i \in [n]$. Under a deterministic weakest-link mechanism, the specimen fails at the zone with the lowest adjusted strength: 
$f_i = \arg\min_{j \in [r]} \{ z_{ij}\}$, or equivalently, 
$f_i = \arg\min_{j \in [r]} \{ \log(z_{ij})\}$. 
However, the $\arg\min$ operator is non-differentiable, which hinders gradient-based Bayesian inference. A deterministic hard minimum also assumes that failure is determined solely by the single lowest modeled strength value, potentially failing to account for unmeasured zone-level material flaws. Although specimen-level heterogeneity is captured through $\theta_i$ in \eqref{eq:full_strength_model}, unmeasured heterogeneity at the zone level is not explicitly modeled. To address these computational and conceptual limitations, we approximate the discrete weakest-link failure process using a differentiable Softmin function. 

While the standard Softmax function is conventionally utilized in multinomial logistic regression and neural networks to identify the maximum value in a set \citep{bridle1990probabilistic}, here we adapt the Softmin function to identify the minimum latent strength. We model the failure zone $f_i$ as a Categorical random variable: 
\begin{equation}\label{eq:categorical}
    f_i \sim \text{Categorical}(\boldsymbol{w}_i),
\end{equation} 
where $\boldsymbol{w}_i = (w_{i1}, w_{i2}, \dots, w_{ir})$. The probability of failure in zone $j$ for specimen $i$ is specified via a modified Softmin function: 
\begin{equation}\label{eq:softmin}
    w_{ij} = \frac{\exp(-k \log(z_{ij}))}{\sum_{m \in [r]} \exp(-k \log(z_{im}))},
\end{equation}
where $k > 0$ controls the sharpness (or concentration) of the failure probability. Because the likelihood of $f_i$ depends on the product $k \log(z_{ij})$, any scaling of $k$ could be perfectly offset by an inverse scaling of $\log(z_{ij})$. To ensure identifiability, $k$ must be fixed as a hyperparameter.

\remark As $k \rightarrow \infty$, the model recovers the deterministic weakest-link behavior with $w_{ij} \rightarrow 1$ for $j = \arg\min_{m}  \{ \log(z_{im}) \}$ and $w_{ij} \rightarrow 0$ otherwise. Conversely, as $k \to 0$, $w_{ij} \rightarrow 1/r$, implying that the failure location is purely random and independent of the material's local adjusted strength.

\remark \label{remark_wij} Note that by expanding the $\log(z_{ij})$ term, 
  \begin{align*}
    w_{ij}
    &= \frac{
        \exp\left(-k\left(\mu + \mathbf{X}_{ij}^{\top}\boldsymbol{\beta}
        + \theta_i - \mathbf{B}_j^{\top}\boldsymbol{\gamma}\right)\right)
      }{
        \sum_{m=1}^{r}
        \exp\left(-k\left(\mu + \mathbf{X}_{im}^{\top}\boldsymbol{\beta}
        + \theta_i - \mathbf{B}_m^{\top}\boldsymbol{\gamma}\right)\right)
      } 
    = \frac{
        \exp\left(k\mathbf{B}_j^{\top}\boldsymbol{\gamma}
        - k\left(\mu + \mathbf{X}_{ij}^{\top}\boldsymbol{\beta}\right)\right)
      }{
        \sum_{m=1}^{r}
        \exp\left(k\mathbf{B}_m^{\top}\boldsymbol{\gamma}
        - k\left(\mu + \mathbf{X}_{im}^{\top}\boldsymbol{\beta}\right)\right)
      }.
  \end{align*}
  The factor $\exp(-k\theta_i)$ cancels, i.e., 
  $w_{ij}$ depends only on $(\mu, \boldsymbol{\beta}, \boldsymbol{\gamma})$
  and is independent of $\theta_i$.

\subsection{Failure Load Likelihood}\label{sec:failure_load_lik}

Together with the failure zone, our modeling framework must also account for the loading magnitude that induces failure. 
Through mechanical theory, the observed failure load can be converted into the maximum stress sustained by the material immediately prior to failure, referred to as the ultimate strength. Let $y_i^{obs}$ denote the log-ultimate strength \footnote{Depending on the loading configuration, $y_i^{obs}$ may represent the log-ultimate bending strength (i.e., the maximum bending stress before fracture), the log-ultimate tensile strength (i.e., the maximum tensile stress reached during a tensile test), or the log-ultimate compressive strength (i.e., the maximum compressive stress before crushing).} of specimen $i$. Conditional on the failure zone $f_i$, we model $y_i^{obs}$ using a normal distribution: 
$   y_i^{obs} \mid f_i, \log(\boldsymbol{y}_i) 
    \sim \mathcal{N}(\log(y_{i,f_i}), \sigma^2_{\epsilon})$,
where $\log(y_{i,f_i})$ is the latent log-strength of the failing zone as defined in \eqref{eq:strength1}, and $\sigma_{\epsilon}$ represents the standard deviation of measurement error. As in \citet{wong2016quantifying}, we treat $\sigma_{\epsilon}$ as a fixed, small number because the measurement uncertainty of the testing apparatus tends to be minimal. Computationally, a fixed $\sigma_{\epsilon}$ also ensures identifiability of the latent strength parameters and facilitates gradient-based posterior sampling.

\subsection{Prior Specification}\label{sec:prior}

We employ flat priors for the global intercept and fixed effects. Specifically, we assign $\mu \sim \mathrm{U}[a, b]$, where $a$ and $b$ represent the physical bounds of the log-strength, and $\boldsymbol{\beta} \sim \mathcal{N}(\mathbf{0}, 100^2 \mathbf{I})$. Following common practice for variance components, we assign an inverse-gamma prior to the random effect variance, i.e., $1/\sigma^2_\theta \sim \Gamma(0.001, 0.001)$ \citep{spiegelhalter1996bugs}. 

The prior specification for $\boldsymbol{\gamma}$ and $\log(\boldsymbol{\psi})$ depends on whether prior knowledge of the loading configuration is available. When no such knowledge is available, independent Gaussian priors $\boldsymbol{\gamma} \sim \mathcal{N}(\mathbf{0}, \sigma^2_\gamma \mathbf{I})$ can be assigned, treating each coefficient as a penalized regression term \citep{fong2010bayesian}. This induces a weakly informative prior on $\log(\boldsymbol{\psi}) = \mathbf{B} \boldsymbol{\gamma}$ with mean zero, corresponding to a uniform stress profile. The parameter $\sigma_\gamma$ is assigned the weakly informative prior $\sigma_\gamma \sim \mathcal{N}_+(0,1)$.  Under this specification, the stress profile is learned entirely from the failure zone likelihood.

When prior knowledge about the loading configuration is available, a physics-informed prior can be placed on $\boldsymbol{\gamma} \sim \mathcal{N}\!\left(\boldsymbol{\gamma}_{\mathrm{theory}},\,
\sigma^2_\gamma\mathbf{I}\right)$, with
\(\boldsymbol{\gamma}_{\mathrm{theory}} = (\mathbf{B}^\top\mathbf{B})^{-1}\mathbf{B}^\top\log(\boldsymbol{\psi}_{\mathrm{theory}}),\)
where $\log(\boldsymbol{\psi}_{\mathrm{theory}})$ is the log-normalized theoretical stress profile derived from mechanical theory. Specifically, the nominal stresses $\boldsymbol{\sigma}$ are evaluated at each zone according to the loading configuration, and log-normalized to satisfy the sum-to-zero constraint:
\begin{equation}\label{eq:log_psi_theory}
  \log(\psi_{\mathrm{theory},j})
  = \log(\sigma_j)
  - \frac{1}{r}\sum_{j'=1}^r \log(\sigma_{j'}),
  \quad j \in [r].
\end{equation}
The prior on $\boldsymbol{\gamma}$ implies
\begin{equation}\label{eq:physics_prior}
  \log(\boldsymbol{\psi})
  \sim \mathcal{N}\left(
  \log(\boldsymbol{\psi}_{\mathrm{theory}}),\,
  \sigma^2_\gamma\mathbf{B}\mathbf{B}^\top\right),
\end{equation}
which treats the theoretical log-normalized stress profile as a prior mean while allowing the experimental data to update the posterior. This approach provides a principled way to quantify discrepancies between theoretical and empirical stress profiles.

Finally, to enforce the identifiability requirement $\sum_{j=1}^{r} \log (\psi_j) = 0$, we utilize a soft-constraint that treats the sum as a pseudo-observation with a tight variance $\sum_{j=1}^{r} \log(\psi_j) \sim \mathcal{N}(0, (0.001r)^2)$.

\subsection{Posterior Distribution}

Let $\boldsymbol{y} = (\boldsymbol{y}_1, \dots, \boldsymbol{y}_n)^\top$
and $\boldsymbol{\theta} = (\theta_1, \dots, \theta_n)^\top$.
We collect all primary model parameters into the vector $\boldsymbol{\eta} = (\mu,\, \boldsymbol{\beta},\,
  \sigma^2_\theta,\, \boldsymbol{\gamma})$.
Since the stress profile is a fixed physical quantity,
$\sigma_\gamma$ acts as a regularization parameter rather than a primary parameter of interest. 
The remaining quantities $\log(\boldsymbol{\psi}) = \mathbf{B}\boldsymbol{\gamma}$,
$\log(\boldsymbol{y})$, and $\log(\boldsymbol{z})$ are deterministic
functions of $\boldsymbol{\eta}$ and $\boldsymbol{\theta}$ rather than free parameters. 
The joint posterior distribution of $(\boldsymbol{\eta}, \boldsymbol{\theta},
\sigma_\gamma)$ given the observed data $(\boldsymbol{y}^{obs}, \boldsymbol{f})$
factorizes as:
\begin{equation}\label{eq:posterior}
  p(\boldsymbol{\eta}, \boldsymbol{\theta}, \sigma_\gamma
  \mid \boldsymbol{y}^{obs}, \boldsymbol{f})
  \propto
  p(\boldsymbol{\eta}, \sigma_\gamma)
  p(\boldsymbol{\theta} \mid \sigma^2_\theta)
  \prod_{i=1}^{n}
  p(f_i \mid \boldsymbol{\eta}, \theta_i)
  \prod_{i=1}^{n}
  p(y_i^{obs} \mid f_i, \boldsymbol{\eta}, \theta_i),
\end{equation}
where the prior distribution for $\boldsymbol{\eta}$ and $\sigma_\gamma$ is specified in Section~\ref{sec:prior}, the random-effects distribution $p(\boldsymbol{\theta} \mid \sigma^2_\theta)$ is described in Section~\ref{sec:base_strength}, the failure zone likelihood $p(f_i \mid \boldsymbol{\eta}, \theta_i)$ is defined in Section~\ref{sec:failure_zone_lik}, and the failure load likelihood $p(y_i^{obs} \mid f_i, \boldsymbol{\eta}, \theta_i)$ is defined in Section~\ref{sec:failure_load_lik}.

\subsection{Identifiability}

We conclude our modeling framework by establishing that the parameter vector $\boldsymbol{\eta}$ is identifiable from the observed data under mild conditions.

\begin{assumption}\label{ass:design}
  Define the augmented vector $\tilde{\mathbf{X}}_{ij} = (1,\, \mathbf{X}_{ij}^{\top})^{\top} \in \mathbb{R}^{p+1}$. Then we assume the matrix $\sum_{i,j}\tilde{\mathbf{X}}_{ij}
  \tilde{\mathbf{X}}_{ij}^{\top}$ has full rank $p+1$.
\end{assumption}

\begin{assumption}\label{ass:bspline}
  The basis matrix $\mathbf{B} \in \mathbf{R}^{r \times G}$ has full column rank $G$, where $G = m + d + 1$ for $m \geq 0$ internal knots and spline degree $d \geq 0$. Note that this is satisfied whenever the $r$ zone evaluation points are distinct and $G \leq r$.
\end{assumption}

\begin{assumption}\label{ass:constraints}
  The sum-to-zero constraint $\sum_{j=1}^r \log(\psi_j) = 0$ is imposed, and the hyperparameters $k > 0$ and $\sigma_\epsilon > 0$ are fixed.
\end{assumption}

\begin{proposition}[Identifiability]\label{prop:ident}
  Under Assumptions~\ref{ass:design}--\ref{ass:constraints}, the parameter vector $\boldsymbol{\eta} = (\mu, \boldsymbol{\beta}, \sigma^2_\theta, \boldsymbol{\gamma})$ is identified:
  if $p(y^{obs}, f \mid \boldsymbol{\eta}) = p(y^{obs}, f \mid \boldsymbol{\eta}')$
  for all $(y^{obs}, f)$, then $\boldsymbol{\eta} = \boldsymbol{\eta}'$.
\end{proposition}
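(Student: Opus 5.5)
Our strategy is to work with the marginal likelihood of $(y_i^{obs}, f_i)$ for a single specimen, integrating out $\theta_i\sim\mathcal{N}(0,\sigma^2_\theta)$. We then extract the parameters in three stages: first $\boldsymbol{\gamma}$ and $\boldsymbol{\beta}$ from the marginal law of $f_i$, then $\mu$ and $\sigma^2_\theta$ from the conditional law of $y_i^{obs}$ given $f_i$. Throughout we treat the covariates $\mathbf{X}_{ij}$ as fixed and known, so that equality of likelihoods holds for every specimen $i$.

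\textbf{Step 1: the failure-zone probabilities.} By Remark~\ref{remark_wij}, $w_{ij}$ does not depend on $\theta_i$. Summing (integrating) the joint density over $y^{obs}$ therefore shows that $P(f_i=j\mid\boldsymbol{\eta}) = w_{ij}(\mu,\boldsymbol{\beta},\boldsymbol{\gamma})$. In this expression $\mu$ cancels as well, since it is common to all zones. Equal likelihoods thus force $w_{ij}=w'_{ij}$ for all $i,j$. Taking log-ratios against a reference zone $j=1$ and using $k>0$ fixed (Assumption~\ref{ass:constraints}), we obtain for all $i,j$:
\[
(\mathbf{B}_j-\mathbf{B}_1)^\top(\boldsymbol{\gamma}-\boldsymbol{\gamma}') - (\mathbf{X}_{ij}-\mathbf{X}_{i1})^\top(\boldsymbol{\beta}-\boldsymbol{\beta}') = 0.
\]
Equivalently, $\mathbf{B}_j^\top\Delta\boldsymbol{\gamma} - \mathbf{X}_{ij}^\top\Delta\boldsymbol{\beta} = c_i$ for some specimen-specific constant $c_i$.

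\textbf{Step 2: the load distribution given the zone.} Conditional on $f_i=j$, the density of $y_i^{obs}$ is the convolution of $\mathcal{N}(\mu+\mathbf{X}_{ij}^\top\boldsymbol{\beta},\sigma^2_\theta)$ with the noise $\mathcal{N}(0,\sigma^2_\epsilon)$. This holds because $f_i$ is independent of $\theta_i$ (Remark~\ref{remark_wij}), so conditioning on $f_i$ does not alter the law of $\theta_i$. The result is $\mathcal{N}(\mu+\mathbf{X}_{ij}^\top\boldsymbol{\beta},\,\sigma^2_\theta+\sigma^2_\epsilon)$. Since $w_{ij}>0$, each zone has positive probability, so equality of joint densities yields equality of these Gaussians for every $(i,j)$. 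Matching variances with $\sigma_\epsilon$ fixed gives $\sigma^2_\theta=\sigma'^2_\theta$. Matching means gives $\tilde{\mathbf{X}}_{ij}^\top(\Delta\mu,\Delta\boldsymbol{\beta})=0$ for all $i,j$. Multiplying by $\tilde{\mathbf{X}}_{ij}$ and summing, Assumption~\ref{ass:design} forces $\Delta\mu=0$ and $\Delta\boldsymbol{\beta}=0$.

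\textbf{Step 3: the stress coefficients.} Substituting $\Delta\boldsymbol{\beta}=0$ into Step 1 gives $\mathbf{B}_j^\top\Delta\boldsymbol{\gamma}=c$ for all $j$, so $\mathbf{B}\Delta\boldsymbol{\gamma}=c\mathbf{1}$. The sum-to-zero constraint applies to both parameter values, so $\mathbf{1}^\top\mathbf{B}\boldsymbol{\gamma}=\mathbf{1}^\top\mathbf{B}\boldsymbol{\gamma}'=0$. Hence $rc=0$, giving $c=0$ and $\mathbf{B}\Delta\boldsymbol{\gamma}=0$. By the full column rank of Assumption~\ref{ass:bspline}, this implies $\Delta\boldsymbol{\gamma}=0$.

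\textbf{Main obstacle.} The delicate point is Step 2: we must justify that conditioning on $f_i$ leaves $\theta_i$ Gaussian, which is exactly what Remark~\ref{remark_wij} guarantees. We must also be careful that the parameter space is restricted to the constraint set, rather than relying on the soft pseudo-observation. Once the order of the stages is chosen correctly — recovering $\boldsymbol{\beta}$ from the loads before returning to the zone probabilities — the remaining steps are routine linear algebra.
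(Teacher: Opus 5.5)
Your proof is correct and follows the paper's argument. The Gaussian convolution $\mathcal{N}(\mu+\mathbf{X}_{ij}^\top\boldsymbol{\beta},\,\sigma^2_\theta+\sigma^2_\epsilon)$ with Assumption~\ref{ass:design} identifies $(\mu,\boldsymbol{\beta},\sigma^2_\theta)$; equal Softmin weights then make $\mathbf{B}(\boldsymbol{\gamma}-\boldsymbol{\gamma}')$ a constant multiple of $\mathbf{1}$, and the sum-to-zero constraint and full column rank of $\mathbf{B}$ give $\boldsymbol{\gamma}=\boldsymbol{\gamma}'$. Your marginalization over $y^{obs}$ to get $w_{ij}=w'_{ij}$ before dividing is slightly cleaner than the paper, which divides by $w_{ij}$ without first showing the two parameter values give the same zone probabilities.
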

The proof is provided in Section \ref{sec:proofs} of the supplementary material.

\section{Simulation Studies}\label{sec:simulation}

Via a series of simulated experiments, we assess the model's ability to accurately recover structural parameters and stress profiles under various loading configurations and data-generating mechanisms.

\subsection{Data Generation Process}

Each specimen $i \in [n]$ is partitioned into $r = 5$ discrete longitudinal zones, with $\mu = 2.5$ representing the mean log-strength of a clear (defect-free) beam. To simulate natural inter-specimen variability, each piece is adjusted by the random effect $\theta_i \sim \mathcal{N}(0, \sigma^2_\theta)$. We vary the number of specimens $n \in \{100, 200, 500\}$ and inter-specimen standard deviation $\sigma_\theta \in \{0.1, 0.3\}$ to assess the quality of parameter recovery across a range of scenarios. 

Within each zone $j$, we incorporate zone-specific covariates including the Total Knot Count (TKC), representing the number of distinct knots within each zone, and the Total Knot Area (TKA), which aggregates the recorded areas of all knots in each zone. We simulate these covariates based on empirical distributions derived from our real dataset: we independently sample $\text{TKC}_{ij} \sim \text{Poisson}(2)$ and $\text{TKA}_{ij} \sim \Gamma(0.4, 0.6)$. The associated strength reduction coefficients are set to $\beta_{tka} = -0.03$ and $\beta_{tkc} = -0.03$.

To represent the stress profile, we employ a B-spline of degree $d=3$ without internal knots ($m=0$), which results in $G=4$ basis functions. The vector of log-scaling factors $\log(\boldsymbol{\psi})$ is modeled using $\eqref{eq:basis}$ according to the specific experimental scenarios detailed in Section \ref{sec:sim_experiment}.  
Given the log-strength $\log(y_{ij})$ and the structural penalty $\log(\boldsymbol{\psi})$, the adjusted log-strength $\log(z_{ij})$ is calculated via~\eqref{eq:int_logcap}. Following the weakest-link theory, the observed failure $f_i$ is determined according to~\eqref{eq:categorical} and~\eqref{eq:softmin}. We investigate the impact of the failure zone likelihood by varying the true sharpness parameter $k$ in the weakest-link mechanism. To account for the inherent variability of timber, we evaluate three scenarios: $k = 10$, where failure is highly concentrated at the theoretical weakest zone;  $k = 6$,  which reflects moderate material unpredictability;  and $k = 2$, which represents high unpredictability. Finally, the observed log-ultimate strength is recorded as the log-strength of the failing zone after the addition of a small measurement error with $\sigma_{\epsilon} = 0.01$. We simulate 100 datasets given these parameters. For the purpose of the simulation study, we assume no prior knowledge of the stress profiles when fitting the model, in order to robustly evaluate parameter identifiability.

The simulated data pairs $(f_i, y_i^{obs})$ were used to fit the proposed Bayesian model via No-U-Turn sampler (NUTS) in CmdStan \citep{cmdstanr}. To ensure efficient convergence and avoid posterior geometry-induced divergent transitions, we utilize a non-centered parameterization for the random effects $\theta_i$ with $\theta_i = \tilde\theta_{i} \sigma_\theta$, where $\tilde\theta_{i} \sim \mathcal{N}(0, 1)$.

\subsection{Loading Configurations}\label{sec:sim_experiment}

We evaluate model performance under two loading experiments, each commonly encountered in mechanical engineering applications. 

\begin{enumerate}
\item Overhanging Cantilever Configuration: This scenario utilizes a symmetric triangular stress profile that peaks at the interior pinned support. Motivated by classical beam theory, the spline coefficients are set to $\boldsymbol{\gamma} = (-0.1, 0.1, 0.1, -0.1)$ to recover a triangular-like stress gradient across the $r$ discrete zones. The choice of $\boldsymbol{\gamma}$ also ensures failures are observed in every zone for each selected $k$. Consequently, $\log(\boldsymbol{\psi})$ are generated according to \eqref{eq:basis} and subsequently centered such that $ \sum_{j}\log(\psi_j)= 0$.

\item Uniaxial Tension Configuration: In this scenario, all zones are subjected to a uniform stress profile across the length of the beam. Therefore, the spline coefficients are set to $\boldsymbol{\gamma} = (0, 0, 0, 0)$, resulting in $\log(\psi_j) = 0$ for all $j \in [r]$. This configuration serves as a baseline to demonstrate that the model fitting and recovery of the parameters is robust even in the simplest setup.

\end{enumerate}

\FloatBarrier
\subsection{Simulation Results}

\subsubsection{Overhanging Cantilever Configuration}

\begin{figure}[h]
    \centering
    \includegraphics[width=\linewidth]{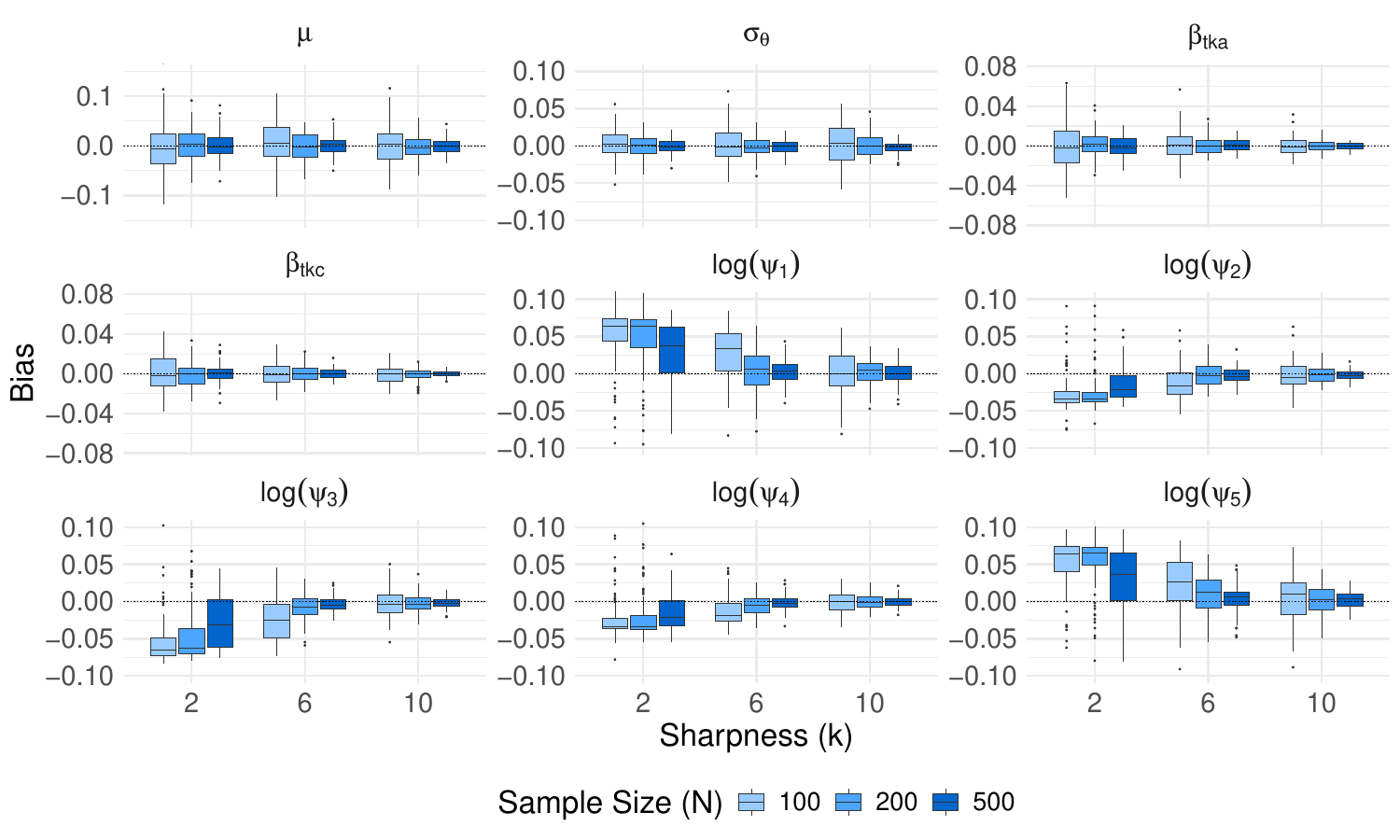}
    \caption{Bias plot of target parameters for simulation of overhanging cantilever configuration with $\sigma_\theta = 0.3$.}
    \label{fig:bias_plot_cantilever_0.3}
\end{figure}

\begin{figure}[h]
    \centering
    \includegraphics[width=\linewidth]{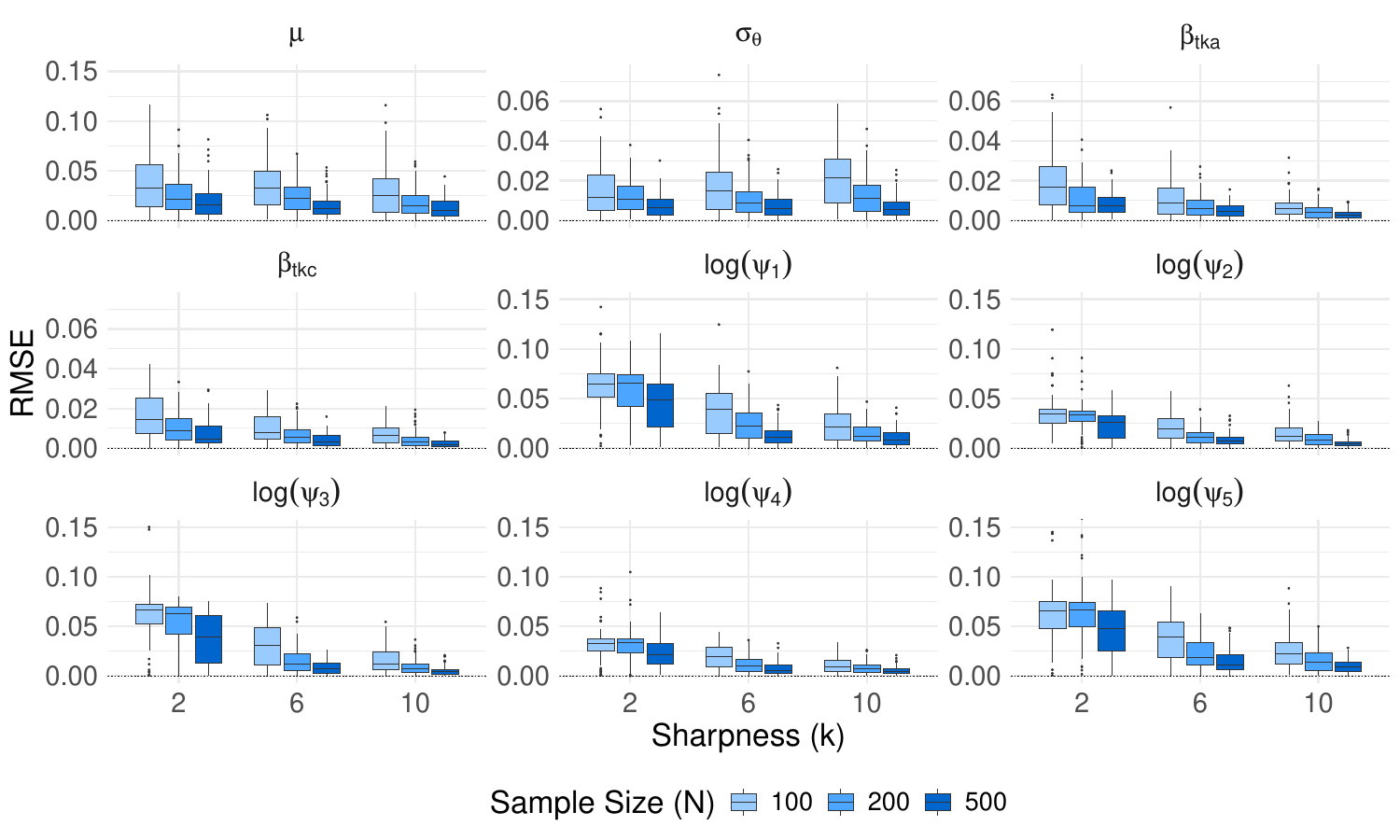}
    \caption{RMSE plot of target parameters for simulation of overhanging cantilever configuration with $\sigma_\theta = 0.3$.}
    \label{fig:rmse_plot_cantilever_0.3}
\end{figure}

Figures~\ref{fig:bias_plot_cantilever_0.3} and~\ref{fig:rmse_plot_cantilever_0.3} present the parameter bias and RMSE for the overhanging cantilever configuration under the high-noise regime ($\sigma_\theta = 0.3$). The posterior means of $\mu$, $\sigma_\theta$, $\beta_{tka}$, and $\beta_{tkc}$ are centered near zero across all configurations. 
As $n$ increases from 100 to 500, the interquartile range of the bias tightens for all parameters and the RMSE declines steadily, reflecting posterior concentration around the true values as more data become available. The effect of $k$ is most pronounced for the stress profile and knot effect parameters. While lower values of $k$ produce more diffuse Softmin weights that carry limited information about the failure zone, 
higher values of $k$ sharpen the weights and concentrate failure probability onto the weakest zone. Thus, increasing $k$ substantially reduces both the bias and RMSE of $\log(\boldsymbol{\psi})$ and narrows the bias distributions of $\beta_{tka}$ and $\beta_{tkc}$. This indicates that a sharper weakest-link mechanism better identifies the correct stress profile from the observed failure zone data.

Figures~\ref{fig:bias_plot_cantilever_0.1} and~\ref{fig:rmse_plot_cantilever_0.1} in Section \ref{sec:sim_results_supp} of the supplementary material present the corresponding results under the low-noise regime ($\sigma_\theta = 0.1$). The strength parameters $\mu$, $\sigma_\theta$, $\beta_{tka}$, and $\beta_{tkc}$ exhibit smaller bias and RMSE compared to the high-noise regime, reflecting reduced inter-specimen variability. In contrast, the stress profile parameters $\log(\boldsymbol{\psi})$ are largely unaffected by changes in $\sigma_\theta$, confirming that the stress profile is informed exclusively by the failure zone likelihood and is therefore independent of the noise level governing the strength parameters. 

\subsubsection{Uniaxial Tension Configuration}

Figures~\ref{fig:bias_plot_tension_0.3} and~\ref{fig:rmse_plot_tension_0.3} in Section \ref{sec:sim_results_supp} of the supplementary material present the bias and RMSE for the uniaxial tension configuration under the high-noise regime ($\sigma_\theta = 0.3$), while Figures~\ref{fig:bias_plot_tension_0.1} and~\ref{fig:rmse_plot_tension_0.1} in Section \ref{sec:sim_results_supp} of the supplementary material present the corresponding results under the low-noise regime ($\sigma_\theta = 0.1$). The posterior means of $\mu$, $\sigma_\theta$, $\beta_{tka}$, $\beta_{tkc}$, and $\log(\boldsymbol{\psi})$ are centered near zero across all configurations, indicating unbiasedness even under substantial between-specimen variability. As $n$ increases, the interquartile range of the bias narrows and the RMSE declines steadily for all parameters. As in the cantilever configuration, the bias and RMSE of $\beta_{tka}$, $\beta_{tkc}$, and $\log(\boldsymbol{\psi})$ narrow as $k$ increases, while $\mu$ and $\sigma_\theta$ are insensitive to $k$, reflecting that these parameters are informed exclusively by the failure load likelihood. 

Compared to the cantilever configuration, the RMSE for $\beta_{tka}$ and $\beta_{tkc}$ stabilizes more rapidly under uniaxial tension, reflecting that the uniform stress profile eliminates the confounding between zone-specific stress profiles and defect locations, facilitating more efficient identification of the covariate effects. The stress profile parameters $\log(\boldsymbol{\psi})$ similarly achieve lower bias and RMSE at smaller sample sizes under uniaxial tension than under the cantilever configuration, consistent with a flat stress profile being easier to recover than a peaked one. As in the cantilever configuration, $\log(\boldsymbol{\psi})$ is largely unaffected by changes in $\sigma_\theta$ under both noise regimes.

Overall the simulation results demonstrate that the proposed framework recovers all parameters accurately across the combination of loading configurations, noise regimes, and values of $k$, establishing the robustness of the model to diverse data-generating mechanisms. The uniaxial tension configuration shows that the framework correctly recovers a flat stress profile as a special case, confirming that the B-spline parameterization avoids overfitting under the simplest possible loading configuration.

\FloatBarrier
\section{Application}\label{sec:realdata}

We apply our methodological framework to the dataset of Douglas-fir crossarms to evaluate its ability to decouple the latent material strength from the underlying stress profile in a real-world engineering application.

\subsection{Data description}\label{sec:data}

The experimental setup is described in
Section~\ref{sec:experiments}, where the dataset consists of 198 Douglas-fir crossarms tested under the overhanging cantilever loading configuration. While \citet{anderson2021ability} considered the crossarms as having seven overlapping regions (including top/bottom bifurcations), for our analysis we re-indexed the regions into five consecutive, non-overlapping longitudinal zones along the beam's length (Figure~\ref{fig:zones}).

To quantify material defects, we utilize both TKC and TKA as zone-specific covariates. Because bending induces both tensile and compressive stress regions, defects may exhibit stress-dependent effects on failure behavior. Given the biaxial cantilever loading configuration, two facets of the beam are primarily in tension while the opposing two are in compression. We therefore stratify each defect covariate into tension and compression components to evaluate their potentially distinct impacts. The ultimate strength in our application is measured by the modulus of rupture (MOR).

We set $G=5$ basis functions. Unlike the simulation studies, which employ a flat prior on $\boldsymbol{\gamma}$ to confirm likelihood identifiability, the real data analysis incorporates engineering knowledge through a physics-informed prior  \eqref{eq:physics_prior}. Specifically, the nominal bending stresses $\boldsymbol{\sigma}$ are evaluated at the midpoint of each zone using the bending moment diagram corresponding to the overhanging cantilever loading configuration (Section~\ref{sec:stress_analysis}), with the hyperprior $\sigma_\gamma \sim \mathcal{N}_+(0, 0.2)$.

\subsection{Model Selection}\label{sec:model_selection}
 
Since $k$ must be fixed as a hyperparameter, we employ 10-fold cross-validation to select the optimal value across $k \in [1, 10]$. For each value of $k$, we run four parallel MCMC chains, each consisting of 10,000 iterations with 5,000 discarded as warmup iterations. The dataset is partitioned into 10 folds, with each fold serving once as the test set while the remaining folds are used for training. All metrics reported in this section are out-of-sample, computed on held-out specimens not used for fitting. We evaluate performance using two complementary sets of criteria. For failure load prediction, we report RMSE and MAE between the posterior predictive mean and the observed $\log(\text{MOR})$ on the test set. For failure zone prediction, we report the Brier score of the predicted zone probabilities \citep{gneiting2007strictly}, the exact hit rate, and the top-2 accuracy. The Brier score for a categorical outcome with $r$ zones is defined as $\frac{1}{n}\sum_i\sum_{j=1}^r (\hat{w}_{ij} - \mathbf{1}[f_i = j])^2$, where $\hat{w}_{ij}$ is the posterior mean predicted probability of failure in zone $j$ for specimen $i$. The Brier score measures the mean squared difference between the predicted posterior probabilities and the observed outcomes, thereby assessing the accuracy and calibration of the entire predictive distribution rather than only the most probable class.  Since $k$ directly controls the sharpness of the Softmin failure zone probabilities, the failure zone metrics are the primary selection criterion; failure load metrics serve as a secondary check.

Table~\ref{tab:cv} summarizes the out-of-sample predictive performance across $k \in [1, 10]$. Failure load RMSE and MAE vary by at most 0.004 across all values of $k$, indicating that failure load prediction is largely insensitive to the sharpness parameter. In contrast, the failure zone metrics reveal a clearer pattern. The Brier score improves steadily from $k = 1$ to $k = 5$, where it reaches its minimum and stabilizes, before deteriorating for $k \geq 6$. This suggests that moderate values of $k$ yield better-calibrated failure zone probability estimates than either very diffuse or overly sharp Softmin weights.
The exact hit rate remains relatively stable at $k = 2$ to $5$ before declining for larger values, whereas the top-2 accuracy is highest at $k = 1$ to $2$ and decreases gradually as $k$ increases. Overall, no single value of $k$ is optimal across all evaluation metrics. We therefore select $k = 5$, as it achieves the highest hit rate and lowest Brier score, while maintaining competitive top-2 accuracy, RMSE, and MAE relative to the other values of $k$. To further assess the robustness of the inference, a comprehensive sensitivity analysis over $k \in [1, 10]$ is presented in Section~\ref{sec:sensitivity}.

\begin{table}[ht]
\caption{Comparison of out-of-sample predictive performance across $k \in [1, 10]$ on the Douglas-fir crossarm dataset.}
\label{tab:cv}
\footnotesize
\centering
\begin{tabular}{lcccccccccc}
\hline
$k$    & 1 & 2 & 3 & 4 & 5 & 6 & 7 & 8 & 9 & 10 \\\hline
RMSE & 0.2535 & 0.2543 & 0.2534 & 0.2536 & 0.2542 & 0.2538 & 0.2545 & 0.2554 & 0.2558 & 0.2577 \\ 
MAE & 0.1964 & 0.1959 & 0.1955 & 0.1963 & 0.1968 & 0.1962 & 0.1976 & 0.1984 & 0.1992 & 0.2005 \\ 
Brier score & 0.6198 & 0.6036 & 0.5980 & 0.5982 & 0.5970 & 0.5977 & 0.5996 & 0.6012 & 0.6020 & 0.6032 \\ 
Hit rate (\%) & 57.58 & 58.08 & 58.08 & 58.08 & 58.08 & 57.07 & 57.07 & 56.5657 & 56.57 & 56.06 \\ 
Top-2 (\%) & 79.29 & 79.29 & 78.79 & 78.79 & 78.28 & 77.78 & 77.27 & 77.27 & 76.77 & 76.77 \\ 
\hline
\end{tabular}
\end{table}

\FloatBarrier
\subsection{Posterior Results}\label{sec:posterior_results}
 
Having selected $k=5$ via cross-validation, we refit the model on the full dataset of 198 specimens. Posterior summaries are presented in Table~\ref{tab:data_result}. The posterior mean of $\mu = 2.3248$ represents the baseline log-strength (ksi) for the population of defect-free timber specimens. The posterior estimate of $\sigma_\theta$ indicates that there is notable between-specimen variability. Specifically, this residual variability might be attributed to timber characteristics and flaws such as grain deviation or resin pockets that are invisible to standard visual grading but significantly reduce the overall strength. 

Regarding the available covariates, total knot area on the tension faces shows a clear negative impact on strength, confirming that larger total knot areas in tension significantly reduce the structural strength of the crossarms. Conversely, the coefficient for total knot area on the compression faces is smaller with much weaker evidence of marginal effect on strength. The total knot counts for both the tension and compression faces exhibit credible intervals that overlap zero, suggesting they provide no additional explanatory power for the variation in strength.
\begin{table}[ht]
\caption{Posterior summaries of parameters of the Douglas-fir crossarm dataset.}\label{tab:data_result}
\centering
\small
\begin{tabular}{lccccc}
  \hline
Parameter & Mean & SD & Median & 2.5\% & 97.5\% \\ 
 \hline
 $\mu$ & 2.3248 & 0.0236 & 2.3256 & 2.2783 & 2.3703 \\ 
  $\sigma_{\theta}$ & 0.2598 & 0.0130 & 0.2592 & 0.2362 & 0.2872 \\ 
  $\beta_{tka}$(tension) & -0.0351 & 0.0112 & -0.0351 & -0.0569 & -0.0132 \\ 
  $\beta_{tka}$(compression) & -0.0145 & 0.0108 & -0.0146 & -0.0357 & 0.0067 \\ 
  $\beta_{tkc}$(tension) & -0.0012 & 0.0133 & -0.0013 & -0.0271 & 0.0252 \\ 
  $\beta_{tkc}$(compression) & -0.0116 & 0.0145 & -0.0119 & -0.0396 & 0.0172 \\ 
  $\gamma_1$ & -0.0466 & 0.0455 & -0.0464 & -0.1366 & 0.0423 \\ 
  $\gamma_2$ & -0.1011 & 0.0760 & -0.1012 & -0.2505 & 0.0464 \\ 
  $\gamma_3$ & 0.9564 & 0.1027 & 0.9555 & 0.7612 & 1.1629 \\ 
  $\gamma_4$ & -0.3863 & 0.1176 & -0.3842 & -0.6235 & -0.1609 \\ 
   \hline
\end{tabular}
\end{table}

\begin{figure}[H]
    \centering
    \includegraphics[width=.6\linewidth]{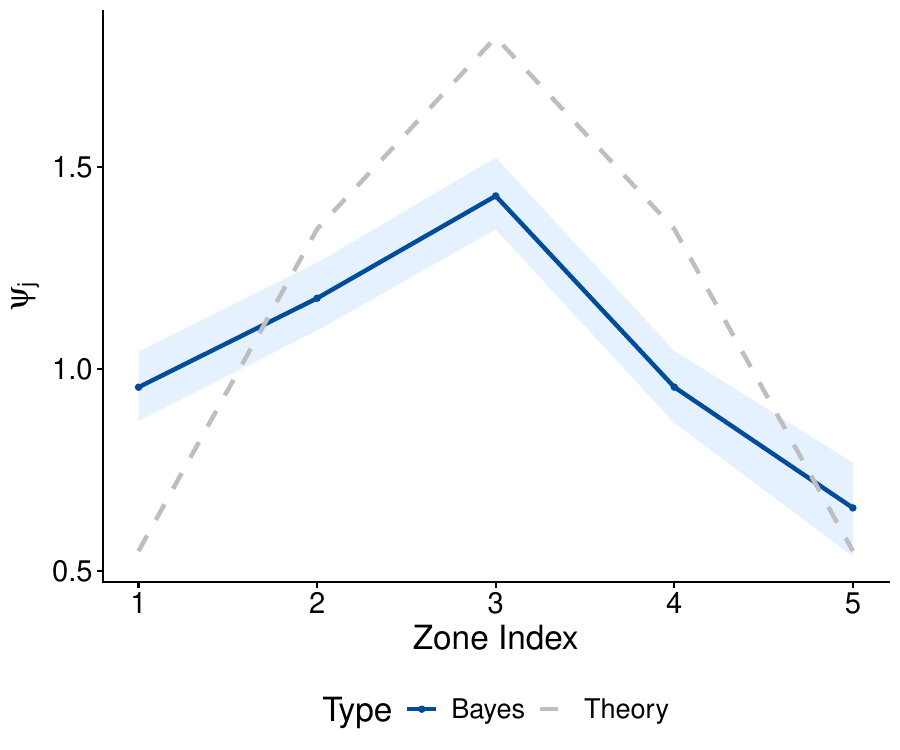}
    \caption{Comparison of the posterior distribution of $\boldsymbol{\psi}$ estimated from the Douglas-fir crossarm data with the theoretical stress profile. The blue line and shaded ribbon represent the posterior means and 95\% credible intervals from the Bayesian model, while the gray line represents the theoretical stress profile.}
    \label{fig:posterior_psi}
\end{figure}

Figure~\ref{fig:posterior_psi} presents the posterior mean and 95\% credible intervals for $\boldsymbol{\psi}$, contrasted against the theoretical stress profile $\boldsymbol{\psi}_{\mathrm{theory}}$. The posterior recovers the primary stress peak in Zone 3, consistent with the theoretical profile. However, our model reveals two systematic deviations from idealized beam theory. First, the inferred stress profile in Zone 1 is substantially higher than the theoretical value, consistent with experimental observations of shear concentrations around the pin-hole connections \citep{anderson2021ability}. Second, Zone 3 experiences a lower effective stress than predicted theoretically, likely because the elevated Zone 1 failures reduce the proportion of surviving specimens that can fail at the theoretical peak. These findings demonstrate that idealized beam theory alone cannot characterize the stress profiles of timber crossarms under real-world loading configurations.

\subsection{Comparison with Theoretical Stress Profile}

\begin{figure}[h]
    \centering
    \begin{subfigure}[b]{0.4\textwidth}
        \centering
        \includegraphics[width=.8\linewidth]{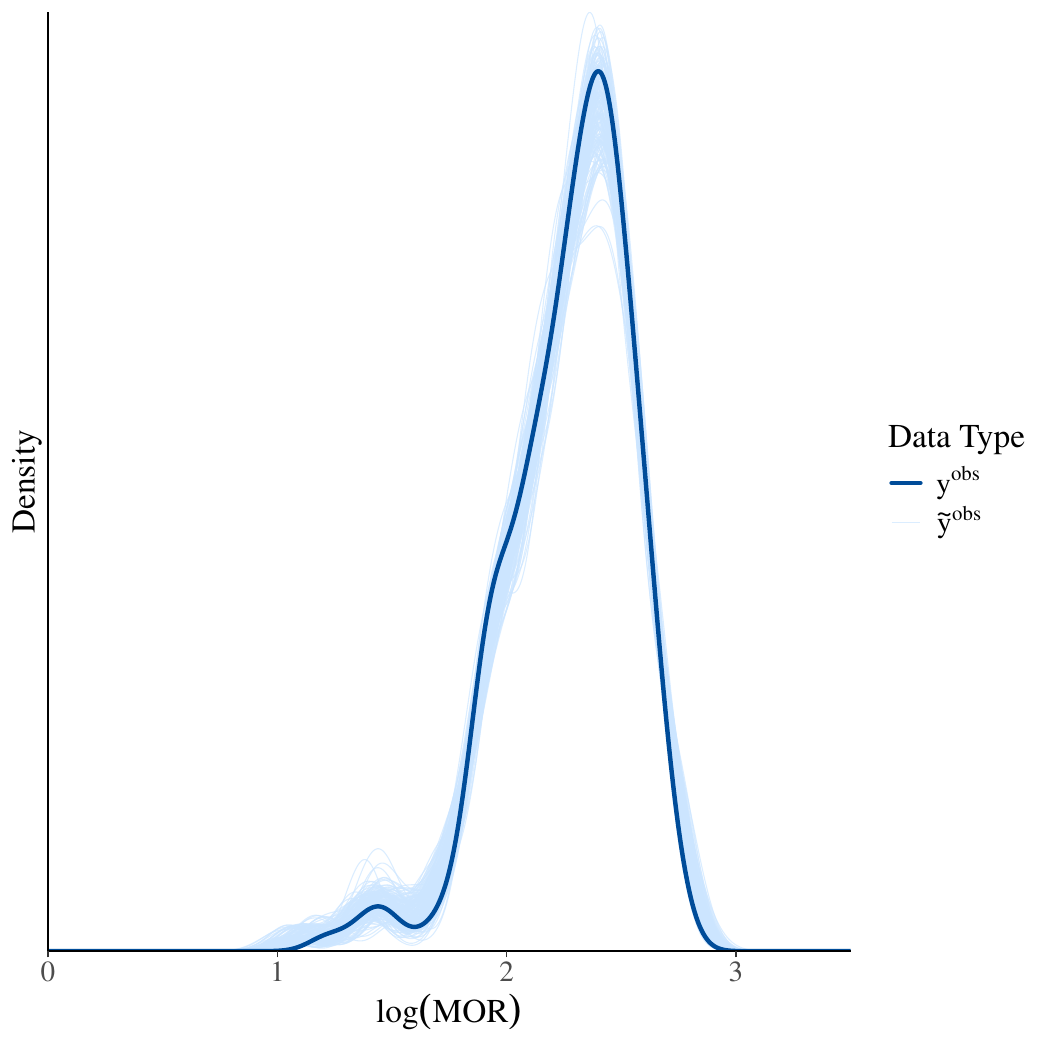}
        \caption{Comparison of $\tilde{y}^{obs}$ (physics-informed splines) and observed $y^{obs}$.}
        \label{fig:diagnostics_y}
    \end{subfigure}
    \begin{subfigure}[b]{0.4\textwidth}
        \centering
        \includegraphics[width=.8\linewidth]{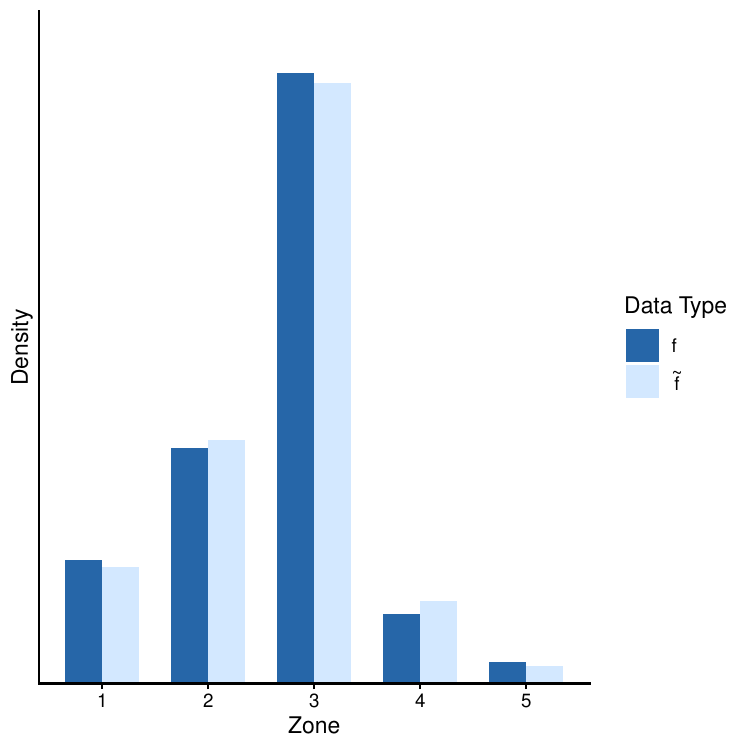}
        \caption{Comparison of $\tilde{f}$ (physics-informed splines) and observed $f$.}
        \label{fig:diagnostics_f}
    \end{subfigure}
    \begin{subfigure}[c]{0.4\textwidth}
        \centering
        \includegraphics[width=.8\linewidth]{fig/real_data_dens_plot.pdf}
        \caption{Comparison of $\tilde{y}^{obs}$ (deterministic theory) and observed $y^{obs}$.}
        \label{fig:diagnostics_y_theory}
    \end{subfigure}
    \begin{subfigure}[d]{0.4\textwidth}
        \centering
        \includegraphics[width=.8\linewidth]{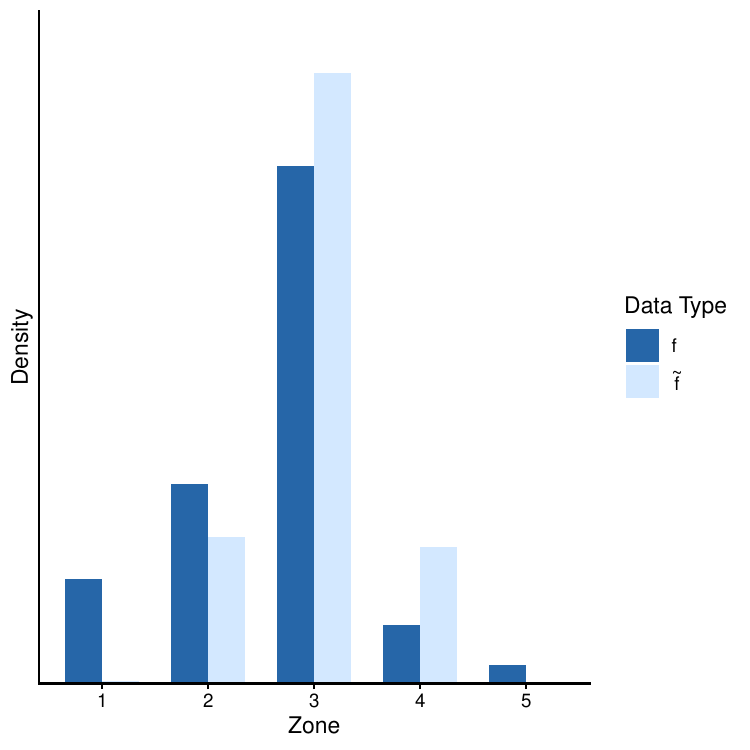}
        \caption{Comparison of $\tilde{f}$ using (deterministic theory) and observed $f$.}
        \label{fig:diagnostics_f_theory}
    \end{subfigure}
    \caption{Comparing physics-informed splines with deterministic stress profile. $\tilde{y}^{obs}$ and $\tilde{f}$ represent posterior predictive log(MOR) and failure zone respectively, while $y^{obs}$ and $f$ represent observed log(MOR) and failure zone respectively. }
    \label{fig:model_diagnostics}
\end{figure}

To demonstrate the practical utility of estimating the stress profile from data, we compare the proposed model with flexible B-splines against a baseline that fixes $\log(\boldsymbol{\psi})$ at the theoretical values $\log(\boldsymbol{\psi}_{\mathrm{theory}})$ -- equivalently, the limiting case of the physics-informed prior as $\sigma_\psi \to 0$. Both models otherwise use the same strength specification, covariates, and $k$. The adequacy of the models are  evaluated using posterior predictive checks \citep{gelman2013bayesian}.

Figure~\ref{fig:model_diagnostics} presents the posterior predictive checks for both models. For failure load prediction (Figures~\ref{fig:diagnostics_y} and~\ref{fig:diagnostics_y_theory}), both models closely reproduce the observed $\log(\text{MOR})$ distribution, including its bimodality. This confirms that the goodness-of-fit for the failure load is driven by the strength model and is insensitive to the stress profile specification. However, the failure zone diagnostics reveal important differences between the two approaches (Figures~\ref{fig:diagnostics_f} and~\ref{fig:diagnostics_f_theory}). The proposed spline model captures the observed failure zone distribution more closely, including the elevated failure rate in Zone 1 attributable to pin-hole shear concentrations \citep{anderson2021ability}. In contrast, the theoretical stress model predicts virtually no failures in the end zones, reflecting the inability of idealized beam theory to capture stresses that deviate from the triangular theoretical profile.

Table~\ref{tab:model_diagnostics} quantifies this difference using the population $\chi^2$ discrepancy \citep{agresti2013categorical}, Kullback--Leibler (KL) divergence \citep{kullback1951information}, Brier score, hit rate, and top-2 accuracy between the posterior predictive and observed failure zone distributions. The spline model outperforms the theoretical baseline on all five metrics. At the population level, the spline model achieves substantially lower $\chi^2$ ($0.004$ vs $5.102$) and KL divergence ($0.002$ vs $0.402$). These results demonstrate that the spline model substantially better replicates the aggregate empirical failure zone distribution, including the elevated Zone 1 failure rate that the theoretical baseline systematically misses. At the specimen level, the spline model also outperforms on all three per-specimen metrics: it achieves a lower Brier score, confirming better calibration of the predicted zone probability distributions, as well as higher exact hit rates and top-2 accuracy, demonstrating its ability to more reliably identify the most vulnerable zone and the two most vulnerable zones for each specimen. Together these results demonstrate that estimating $\log(\boldsymbol{\psi})$ from failure location data substantially improves the model's ability to characterize structural vulnerability at both the population and specimen level.

\begin{table}[htbp]
\caption{Posterior predictive discrepancy between
predicted and observed failure zone distributions. Physics-informed splines model is the proposed model, while the Deterministic theory model is a baseline that fixes $\log(\boldsymbol{\psi})$ at the theoretical values $\log(\boldsymbol{\psi}_{\mathrm{theory}})$.}
\label{tab:model_diagnostics}
\centering
\small
\begin{tabular}{lccccc}\hline
Model & Population $\chi^2$ & KL divergence & Brier score & Hit rate (\%) &  Top-2 (\%)\\\hline
Physics-informed splines & 0.0035 & 0.0018 & 0.5830 & 58.0808 & 78.2828 \\
Deterministic theory & 5.1024 & 0.4016 & 0.6138 & 57.5758 & 75.7576 \\\hline
\end{tabular}

\end{table}

\subsection{Sensitivity to Failure Zone Sharpness}\label{sec:sensitivity}

Figure~\ref{fig:sensitivity_param} shows the posterior estimates for parameters $\mu, \sigma_\theta, \beta_{tka}$ and $\beta_{tkc}$ from the sensitivity analysis over a range of $k \in [1, 10]$. The stability of the posterior estimates demonstrates the robustness of the model's conclusions. 
The most notable result is the stability of $\beta_{tka}$(tension). Across all values of $k$, the posterior mean is uniformly negative and the 95\% credible intervals exclude zero, providing strong evidence that total knot area for tension-side defects is a primary driver of structural failure. 
In contrast, the evidence for compression-side defects is weaker, with the 95\% credible intervals for $\beta_{tka}$(compression) overlapping zero for most values of $k$. Likewise, the coefficients associated with total knot count all have 95\% credible intervals that include zero, confirming their negligible marginal contribution after accounting for total knot area. The 95\% credible intervals for all four covariates shrink as $k$ increases, consistent with the findings from the simulation study under the high noise regime. While some minor deviations are seen in $\mu$ and $\sigma_{\theta}$ as $k$ increases, the core physical inferences remain unchanged. 

Figure~\ref{fig:sensitivity_psi} shows the posterior stress profile $\boldsymbol{\psi}$ across $k \in [1, 10]$. The shape of $\log(\boldsymbol{\psi})$ is highly stable: Zone 3 is identified as the primary stress peak and the asymmetric cantilever pattern is well-recovered across all values of $k$. We observe that the absolute magnitude of $\psi$ scales inversely with $k$. This reflects the fact that, as the Softmin function approaches a strict minimum, a smaller difference in the latent stress profile is sufficient to determine the failure.

\begin{figure}[h]
    \centering
    \includegraphics[width=.9\linewidth]{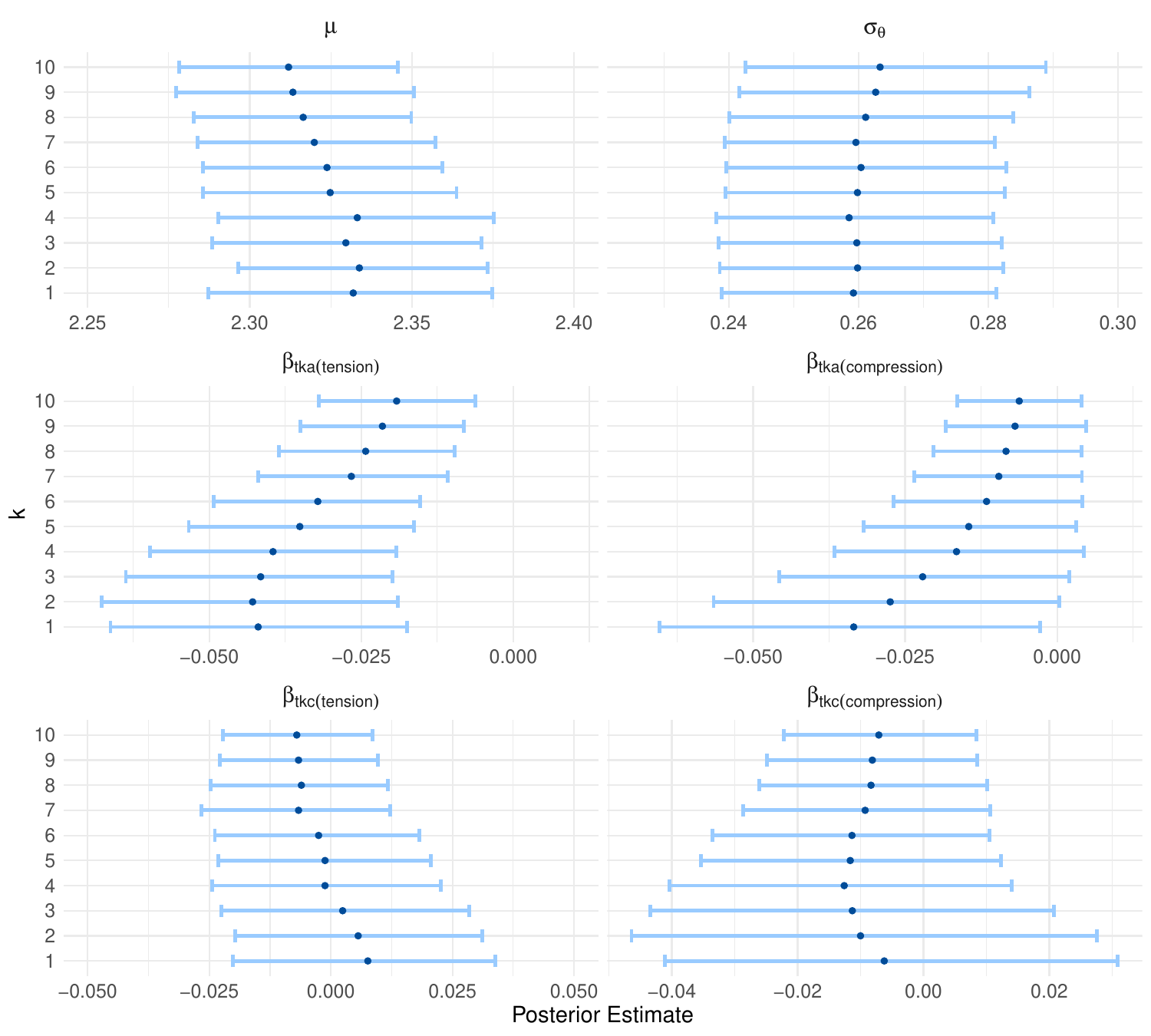}
    \caption{Sensitivity analysis for parameters $\mu, \sigma_\theta, \beta_{tka}$ and $\beta_{tkc}$ on the Douglas-fir crossarm dataset.}
    \label{fig:sensitivity_param}
\end{figure}

\begin{figure}[h]
    \centering
    \includegraphics[width=\linewidth]{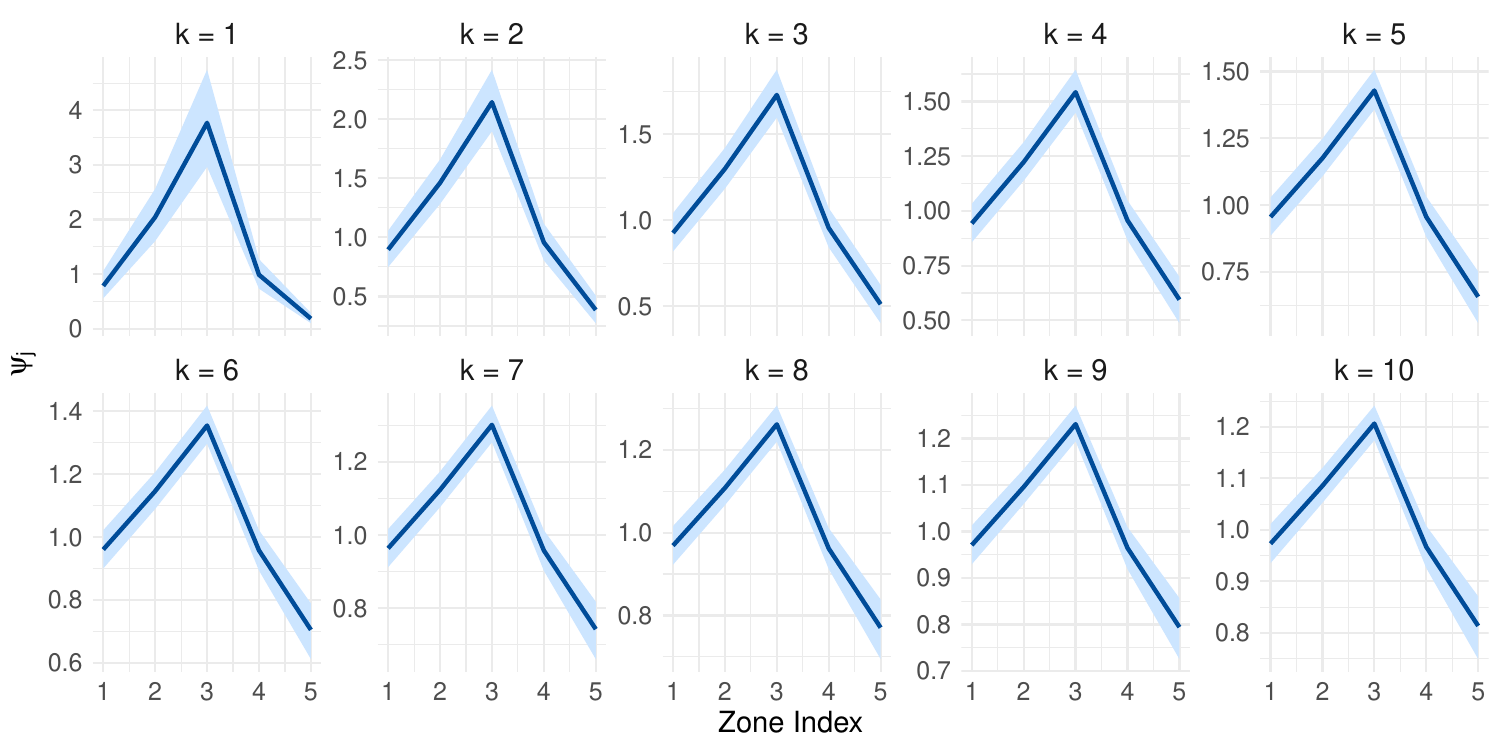}
    \caption{Sensitivity analysis for $\boldsymbol{\psi}$ on the Douglas-fir crossarm dataset.}
    \label{fig:sensitivity_psi}
\end{figure}

\section{Discussion}\label{sec:conclusion}

In this work, we developed a Bayesian weakest-link framework from paired failure load and failure location data. A key contribution of the model is its explicit decoupling of intrinsic material strength from the latent stress profile: by parameterizing the stress profile through a non-parametric B-spline basis expansion and approximating the weakest-link mechanism with a differentiable Softmin function, the model recovers the inherent stress profile informed by failure data. Simulation studies across two canonical loading configurations -- overhanging cantilever and uniaxial tension -- confirm the framework's robustness across diverse mechanical setups. 

Application to Douglas-fir crossarms under an overhanging cantilever loading reveals three notable findings. First, total knot area on the tension faces has a clear negative effect on strength, while defects on the compression faces show weaker marginal effects, and the total knot count has a negligible marginal effect. Second, the recovered stress profile reveals systematic deviations from idealized beam theory: the inferred stress profile in Zone 1 is substantially higher than the theoretical prediction, consistent with localized shear concentrations around the pin-hole connections that simple beam theory cannot capture. Third, the proposed framework significantly outperforms a baseline model that fixes the stress profile at its theoretical value. This demonstrates that treating the stress profile as an estimated latent process yields more comprehensive inference than assuming it is known from theory. Sensitivity analysis demonstrates that parameter estimates are robust across a range of sharpness values. 

Several limitations of the current framework suggest directions for future work. First, the stress scaling factor $\psi_j$ represents a single effective stress profile per zone, implicitly assuming that failure within each zone is governed by a dominant stress state. Under biaxial loading, such as the one used in the Douglas-fir crossarm experiments, the tension, compression, and shear faces experience different stress gradients. Therefore, extending the model to accommodate multi-dimensional stress profile and multiple failure modes, including simultaneous tension, compression, and shear, would allow for a more comprehensive characterization of competing failure mechanisms. 
Second, the model currently relies on manually recorded zone-level defect measurements. Recent advances in high-resolution imaging and deep learning offer a promising avenue for automating defect detection and characterizing complex grain-knot interactions that are difficult to parameterize manually \citep{fathi2020prediction, wang2024machine, feng2024machine}.  Incorporating these image-derived features into the proposed framework would provide a richer covariate structure, potentially reducing uncertainty in strength prediction and improving parameter estimation. Third, the current model captures inter-specimen heterogeneity through a single Gaussian random effect $\theta_i$. In specimens with spatially correlated defects, such as knot clusters spanning multiple zones, a zone-specific random effect structure may better capture within-specimen heterogeneity. 

\section*{Disclosure statement}\label{disclosure-statement}

The authors report there are no competing interests to declare.

\section*{Data Availability Statement}\label{data-availability-statement}

The Douglas fir crossarm dataset is publicly available from \url{https://ir.library.oregonstate.edu/concern/graduate_thesis_or_dissertations/zc77sw48x}.

\phantomsection\label{supplementary-material}
\bigskip

\section*{Supplementary Material}\label{supplementary-material}

\begin{description}
\item[Supplementary document:] Includes Supplement sections A--C referenced in Sections 2, 3, and 4 of the main text. (.pdf file)
\end{description}

\FloatBarrier
\bibliography{ref}  

\newpage
\setcounter{section}{0} 
\renewcommand\thesection{\Alph{section}}
\renewcommand\thesubsection{\thesection.\arabic{subsection}}
\renewcommand{\thefigure}{S\arabic{figure}}
\renewcommand{\thetable}{S\arabic{table}}

\setcounter{figure}{0} 

\section*{Supplementary Material for A Bayesian Weakest-Link Framework for Joint Estimation of Material Strength and Stress Profile}

\section{Theoretical Stress Profile for Crossarm Specimens}\label{sec:stress_analysis}

Because the load was applied at a $17.5^{\circ}$ angle, bending was induced about both principal axes of the cross-section ($x$ and $y$ in Figure~\ref{fig:zones}). Following the principle of superposition for linear elastic structures, the combined bending stress $\sigma$ at any longitudinal location $l$ is expressed as 
\citep{courtney2005mechanical, boresi2002advanced}:

\begin{equation}
\sigma(l) = \frac{M_x(l) \cdot y}{I_x} + \frac{M_y(l) \cdot x}{I_y}, \label{eq:bending}
\end{equation}
where
\begin{itemize}
    \item $l$ is the longitudinal distance from the loading point.
    \item $x$ and $y$ are the distances from the neutral axes to the specific point on the cross-section (typically the furthest fibers at $b/2$ and $h/2$).
    \item $M_x(l)$ and $M_y(l)$ are the bending moments about the $x$ (width-wise) and $y$ (height-wise) axes, respectively.
    \item $I_x$ and $I_y$ are the corresponding moments of inertia.
\end{itemize}

For a beam with total length $L=96''$, where the maximum moment occurs at the midpoint $l=48''$, the bending stress for each longitudinal position $l$ is defined by the following function:

\begin{equation}
    \sigma(l) = 
    \begin{cases} 
    \frac{6 P_y l}{b h^2} + \frac{6 P_x l}{h b^2} & \text{ if } l \leq 48'', \\
    \\
    \frac{6 P_y (96 - l)}{b h^2} + \frac{6 P_x (96 - l)}{h b^2} & \text{ if } l > 48''.
    \end{cases}
\end{equation}
In the above equations, $P_y$ and $P_x$ denote the components of the applied load $P$ resolved along the height ($y$) and width ($x$) axes. These components are related to the specimen's MOR (denoted as $\sigma_{max}$) as follows:
\begin{align}
    P_y = P \sin\theta &= \frac{\sigma_{max} b^2 h^2 \sin\theta}{6L (b \sin\theta + h \cos\theta)}, \\
    P_x = P \cos\theta &= \frac{\sigma_{max} b^2 h^2 \cos\theta}{6L (b \sin\theta + h \cos\theta)},
\end{align}
where the parameters are $L = 96''$, $\theta = 17.5^{\circ}$, $b=3.5''$, and $h=4.5''$.

\section{Proof for Model Identifiability}\label{sec:proofs}

\begin{proof}
  The argument proceeds in two parts.

  \bigskip
  (i) Identification of $(\mu, \boldsymbol{\beta}, \sigma^2_\theta)$. Write
  $m_{ij} = \mu + \mathbf{X}_{ij}^{\top}\boldsymbol{\beta}$. From Remark~\ref{remark_wij}, $w_{ij}$ does not depend on $\theta_i$, so the joint likelihood for specimen $i$ at failure zone $f_i = j$ is:
  \begin{align*}
    p\left(y^{obs}, f_i = j \mid \boldsymbol{\eta}\right)
    &= w_{ij} \cdot \int
      \phi\left(y^{obs};\, m_{ij} + \theta_i,\, \sigma^2_\epsilon\right)
      \cdot \phi\left(\theta_i;\, 0,\, \sigma^2_\theta\right)
      \,d\theta_i.
  \end{align*}
  The integral is a convolution of two Gaussian densities. Write $y^{obs} = m_{if_i} + \theta_i + \varepsilon_i$ where $\theta_i \sim \mathcal{N}(0, \sigma^2_\theta)$ and
  $\varepsilon_i \sim \mathcal{N}(0, \sigma^2_\epsilon)$ are independent. Since the sum of independent Gaussians is Gaussian,
  $\theta_i + \varepsilon_i
    \sim \mathcal{N}\left(0,\;
    \sigma^2_\theta + \sigma^2_\epsilon\right)$, therefore $y^{obs} = m_{ij} + (\theta_i + \varepsilon_i)
  \sim \mathcal{N}(m_{ij},\, \sigma^2_\theta + \sigma^2_\epsilon)$,
  which gives:
  \[
    p(y^{obs})
    = \phi\left(y^{obs};\, m_{ij},\,
      \sigma^2_\theta + \sigma^2_\epsilon\right).
  \]
  Hence:
  \[
    \int
      \phi\left(y^{obs};\, m_{ij} + \theta_i,\, \sigma^2_\epsilon\right)
      \cdot \phi\left(\theta_i;\, 0,\, \sigma^2_\theta\right)
    \,d\theta_i
    = \phi\left(y^{obs};\, m_{ij},\,
      \sigma^2_\theta + \sigma^2_\epsilon\right).
  \]
 Therefore the joint likelihood simplifies to:
  \begin{equation}\label{eq:joint_closed}
    p\left(y^{obs}, f_i = j \mid \boldsymbol{\eta}\right)
    = w_{ij} \cdot
      \phi\left(y^{obs};\, m_{ij},\, \sigma^2_\theta + \sigma^2_\epsilon\right).
  \end{equation}
Suppose $p(y^{obs}, f_i \mid \boldsymbol{\eta})
  = p(y^{obs}, f_i \mid \boldsymbol{\eta^{\prime}})$ for all $(y^{obs}, f_i)$.
  Fixing $f_i = j$ and using~\eqref{eq:joint_closed}:
  \[
    w_{ij}\cdot\phi\left(y^{obs};\, m_{ij},\,
    \sigma^2_\theta + \sigma^2_\epsilon\right)
    = w_{if_i}\cdot\phi\left(y^{obs};\, m^{\prime}_{ij},\,
    \sigma^{2\prime}_\theta + \sigma^{2\prime}_\epsilon\right)
    \quad \forall\, y^{obs}.
  \]
  Since $w_{ij} > 0$ for all $k > 0$
  (Assumption~\ref{ass:constraints}), divide both sides by $w_{ij}$:
  \[
    \phi\left(y^{obs};\, m_{ij},\,
    \sigma^2_\theta + \sigma^2_\epsilon\right)
    = \phi\left(y^{obs};\, m^{\prime}_{ij},\,
    \sigma^{2\prime}_\theta + \sigma^{2\prime}_\epsilon\right)
    \quad \forall\, y^{obs}.
  \]
  
  A Gaussian density is uniquely determined by its mean and variance: two Gaussian densities are equal for all argument values if and only if both their means and variances are equal. Hence,
  \[
  \mu + \mathbf{X}_{ij}^{\top}\boldsymbol{\beta}
  = \mu' + \mathbf{X}_{ij}^{\top}\boldsymbol{\beta}'
  \quad \forall\, i \in [n],\, j \in [r].
\]
Define the augmented vector
$\tilde{\mathbf{X}}_{ij} = (1,\,
\mathbf{X}_{ij}^{\top})^{\top} \in \mathbb{R}^{p+1}$
and $\tilde{\boldsymbol{\theta}} = (\tilde\mu,\,
\tilde{\boldsymbol{\beta}}^{\top})^{\top}
\in \mathbb{R}^{p+1}$, so that the mean equality
becomes:
\[
  \tilde{\mathbf{X}}_{ij}^{\top}
  \tilde{\boldsymbol{\theta}} = 0
  \quad \forall\, i \in [n],\, j \in [r].
\]
Multiplying both sides by
$\tilde{\mathbf{X}}_{ij}$ and summing over
all $i$ and $j$:
\[
  \sum_{i=1}^{n}\sum_{j=1}^{r}
  \tilde{\mathbf{X}}_{ij}
  \tilde{\mathbf{X}}_{ij}^{\top}
  \tilde{\boldsymbol{\theta}}
  =
  \left(\sum_{i=1}^{n}\sum_{j=1}^{r}
  \tilde{\mathbf{X}}_{ij}
  \tilde{\mathbf{X}}_{ij}^{\top}\right)
  \tilde{\boldsymbol{\theta}}
  = \mathbf{0}.
\]
Full column rank of
$\sum_{i,j}\tilde{\mathbf{X}}_{ij}
  \tilde{\mathbf{X}}_{ij}^{\top}$
(Assumption~\ref{ass:design}) implies
$\boldsymbol{\beta} = \boldsymbol{\beta}'$ and $\mu = \mu'$. Hence $\mu$ and $\boldsymbol{\beta}$ are identified.

From the equal Gaussian density, we also get
\[
    \sigma^2_\theta + \sigma^2_\epsilon
    = \sigma^{2\prime}_\theta + \sigma^{2\prime}_\epsilon. 
  \]
Since $\sigma^2_\epsilon$ is fixed (Assumption~\ref{ass:constraints}),  $\sigma^2_\epsilon = \sigma^{2\prime}_\epsilon$. Hence $\sigma^2_\theta = \sigma^{2\prime}_\theta$, and $\sigma^2_\theta$ is identified.

\bigskip
(ii) Identification of $\boldsymbol{\gamma}$.
  Let $a_{ij} = k(\mu + \mathbf{X}_{ij}^{\top}\boldsymbol{\beta})$ be the
  part independent of $\boldsymbol{\gamma}$. Since $\mu$ and $\boldsymbol{\beta}$ are identified from (i) and $k$ is fixed (Assumption~\ref{ass:constraints}), $a_{ij}$ is fixed. Then
  \[
    w_{ij}(\boldsymbol{\gamma})
    = \frac{\exp\left(k\mathbf{B}_j^{\top}\boldsymbol{\gamma} - a_{ij}\right)}
           {\sum_{m=1}^{r}\exp\left(k\mathbf{B}_m^{\top}\boldsymbol{\gamma}
           - a_{im}\right)}.
  \]
  Suppose $w_{ij}(\boldsymbol{\gamma}) = w_{ij}(\boldsymbol{\gamma}')$ for
  all $j \in [r]$. Then for every $j$,
  \[
    \frac{\exp\left(k\mathbf{B}_j^{\top}\boldsymbol{\gamma}\right)}
         {\sum_{m}\exp\left(k\mathbf{B}_m^{\top}\boldsymbol{\gamma}
         - a_{im}\right)}
    =
    \frac{\exp\left(k\mathbf{B}_j^{\top}\boldsymbol{\gamma}'\right)}
         {\sum_{m}\exp\left(k\mathbf{B}_m^{\top}\boldsymbol{\gamma}'
         - a_{im}\right)}.
  \]
  Let $S = \sum_{m}\exp(k\mathbf{B}_m^{\top}\boldsymbol{\gamma} - a_{im})$
  and $S' = \sum_{m}\exp(k\mathbf{B}_m^{\top}\boldsymbol{\gamma}' - a_{im})$.
  Cross-multiplying gives, for all $j$:
  \[
    \exp\left(k\mathbf{B}_j^{\top}\boldsymbol{\gamma}\right)\cdot S'
    = \exp\left(k\mathbf{B}_j^{\top}\boldsymbol{\gamma}'\right)\cdot S,
  \]
  which implies:
  \[
    \exp\left(k\mathbf{B}_j^{\top}(\boldsymbol{\gamma} - \boldsymbol{\gamma}')\right)
    = \frac{S}{S'} =: C \quad \text{for all } j \in [r],
  \]
  where $C > 0$ is a constant independent of $j$. Taking logarithms, $k\mathbf{B}_j^{\top}(\boldsymbol{\gamma} - \boldsymbol{\gamma}')
    = \log C \text{ for all } j \in [r]$,
  which means the vector $\mathbf{B}(\boldsymbol{\gamma} - \boldsymbol{\gamma}')
  \in \mathbb{R}^r$ is a constant vector:
  \[
    \mathbf{B}(\boldsymbol{\gamma} - \boldsymbol{\gamma}')
    = \frac{\log C}{k}\,\mathbf{1}.
  \]

  Summing over all zones $j \in [r]$ and applying the sum-to-zero
  constraint $\sum_{j=1}^{r}\log\psi_j = 0$
  (Assumption~\ref{ass:constraints}):
  \[
    \sum_{j=1}^{r}\left(\log\psi_j - \log\psi_j'\right)
    = \sum_{j=1}^{r}\frac{\log C}{k}
    = \frac{r\log C}{k}
    = 0,
  \]
  which forces $\log C = 0$, i.e., $C = 1$. Therefore:
  \[
    \mathbf{B}(\boldsymbol{\gamma} - \boldsymbol{\gamma}') = \mathbf{0}.
  \]
  Since $\mathbf{B}$ has full column rank (Assumption~\ref{ass:bspline}),
  this implies $\boldsymbol{\gamma} = \boldsymbol{\gamma}'$.
  
  \bigskip
  Together parts (i)--(ii) establish $\boldsymbol{\eta} = \boldsymbol{\eta}'$.
\end{proof}

\section{Simulation Results}\label{sec:sim_results_supp}

\begin{figure}[h]
    \centering
    \includegraphics[width=\linewidth]{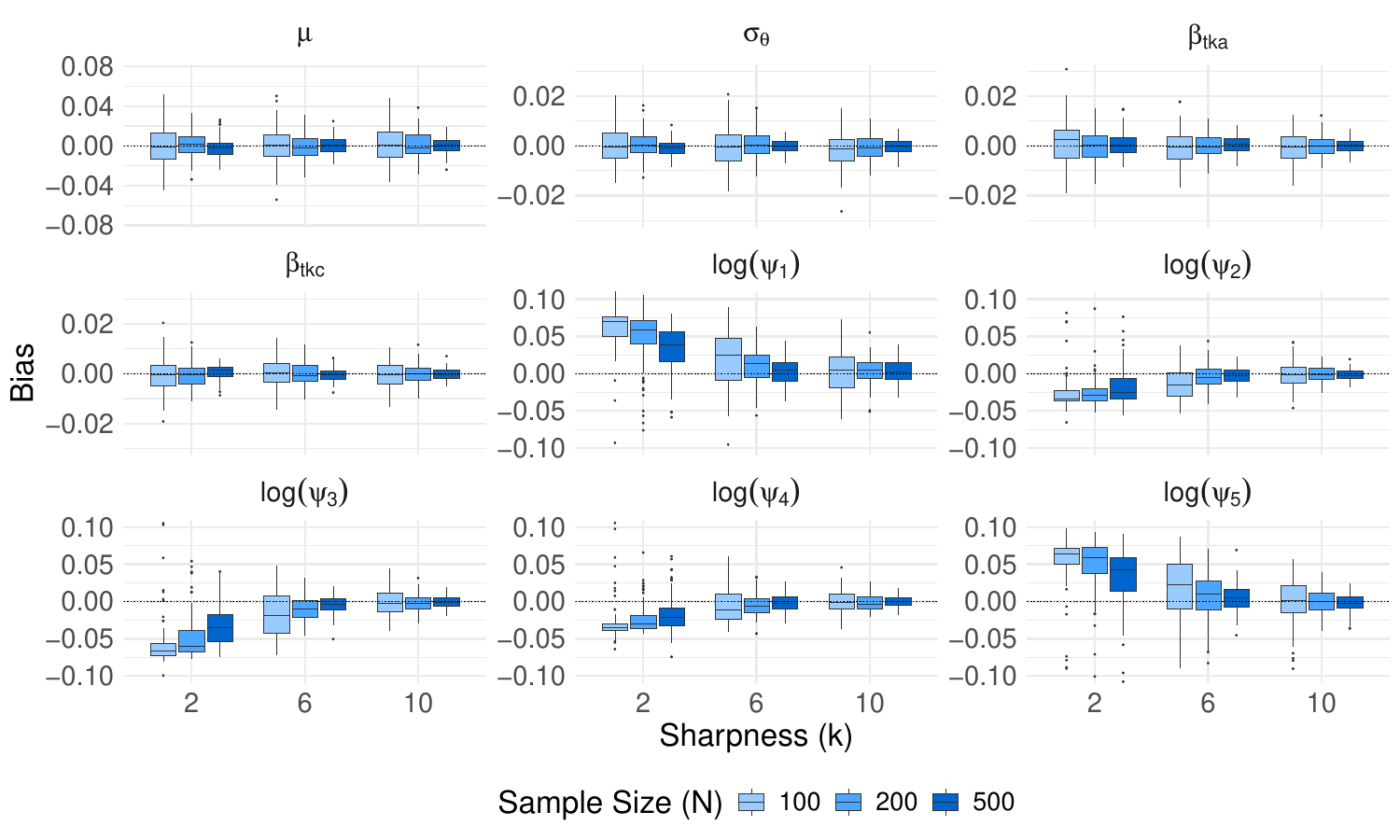}
    \caption{Bias plot of target parameters for specimens in cantilever configuration with $\sigma_\theta = 0.1$.}
    \label{fig:bias_plot_cantilever_0.1}
\end{figure}

\begin{figure}[h]
    \centering
    \includegraphics[width=\linewidth]{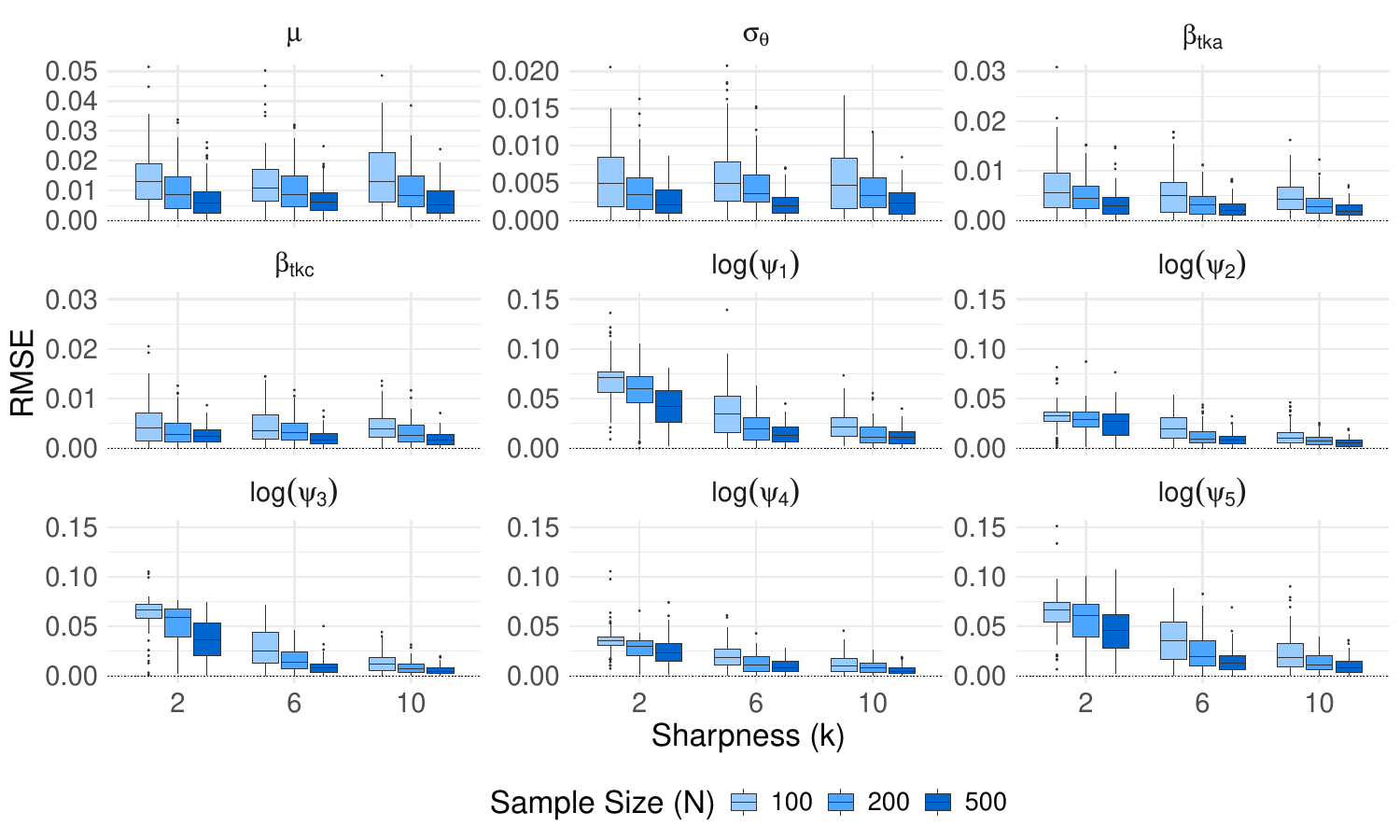}
    \caption{RMSE plot of target parameters for specimens in cantilever configuration with $\sigma_\theta = 0.1$.}
    \label{fig:rmse_plot_cantilever_0.1}
\end{figure}

\begin{figure}[h]
    \centering
    \includegraphics[width=\linewidth]{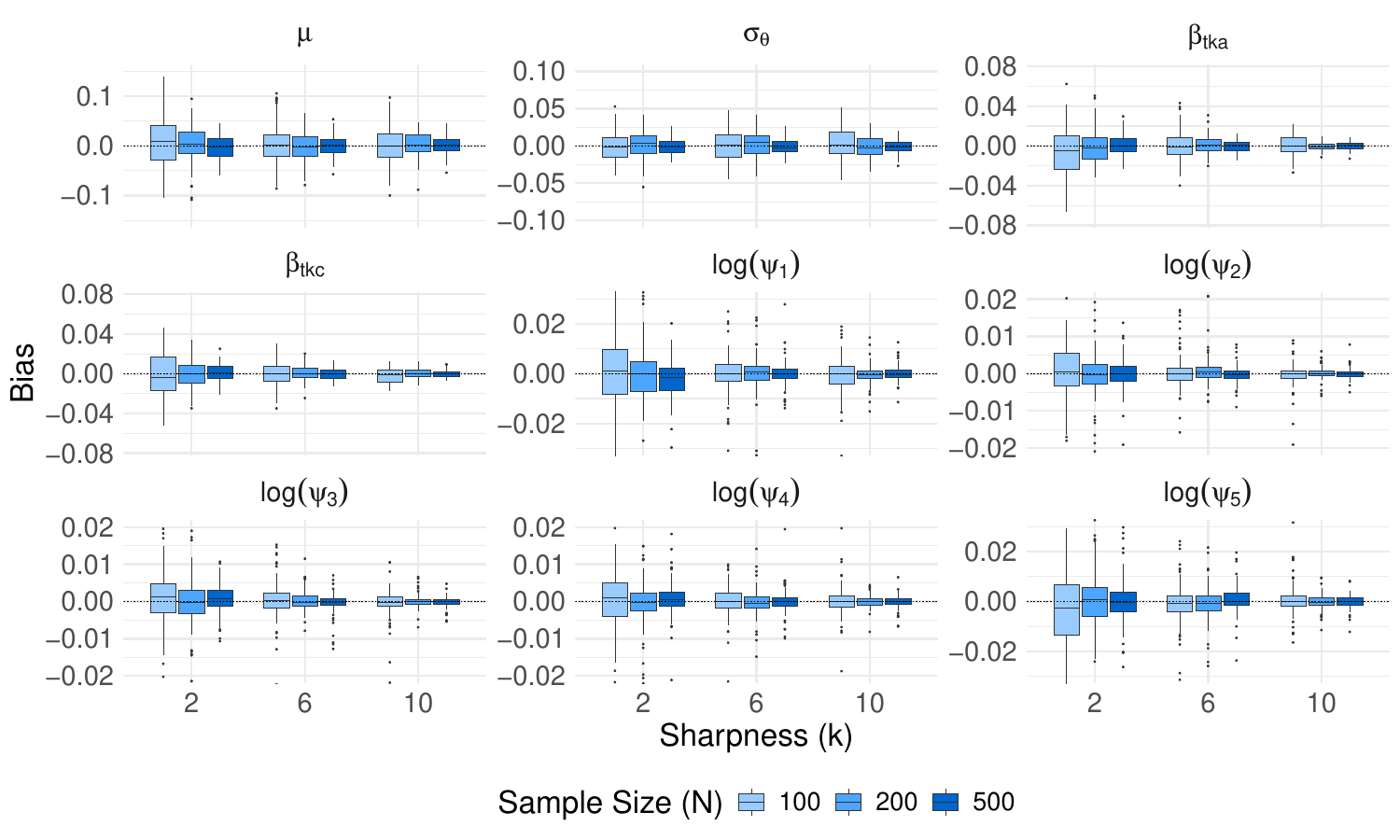}
    \caption{Bias plot of target parameters for simulation of uniaxial tension configuration with $\sigma_\theta = 0.3$.}
    \label{fig:bias_plot_tension_0.3}
\end{figure}

\begin{figure}[h]
    \centering
    \includegraphics[width=\linewidth]{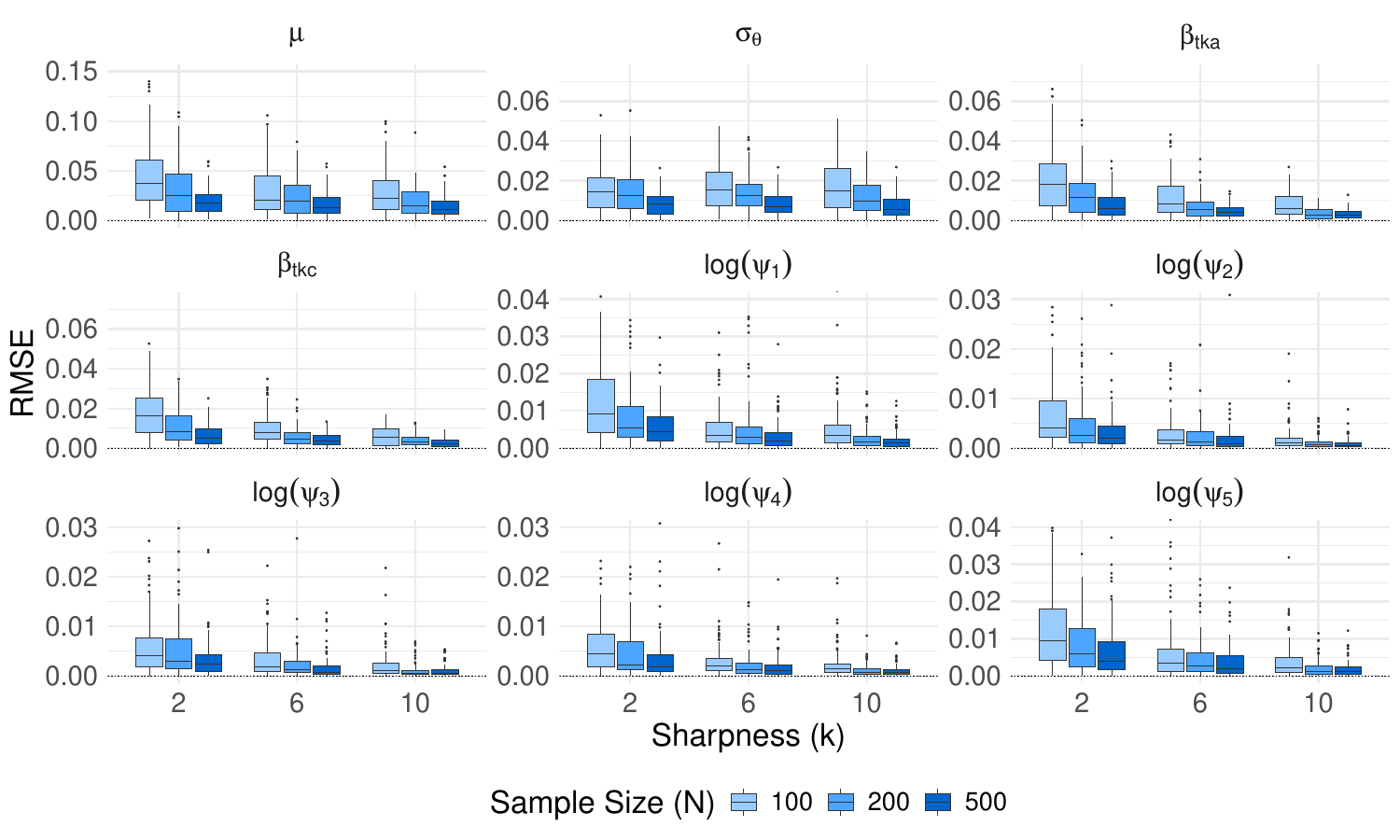}
    \caption{RMSE plot of target parameters for simulation of uniaxial tension configuration with $\sigma_\theta = 0.3$.}
    \label{fig:rmse_plot_tension_0.3}
\end{figure}

\begin{figure}[h]
    \centering
    \includegraphics[width=\linewidth]{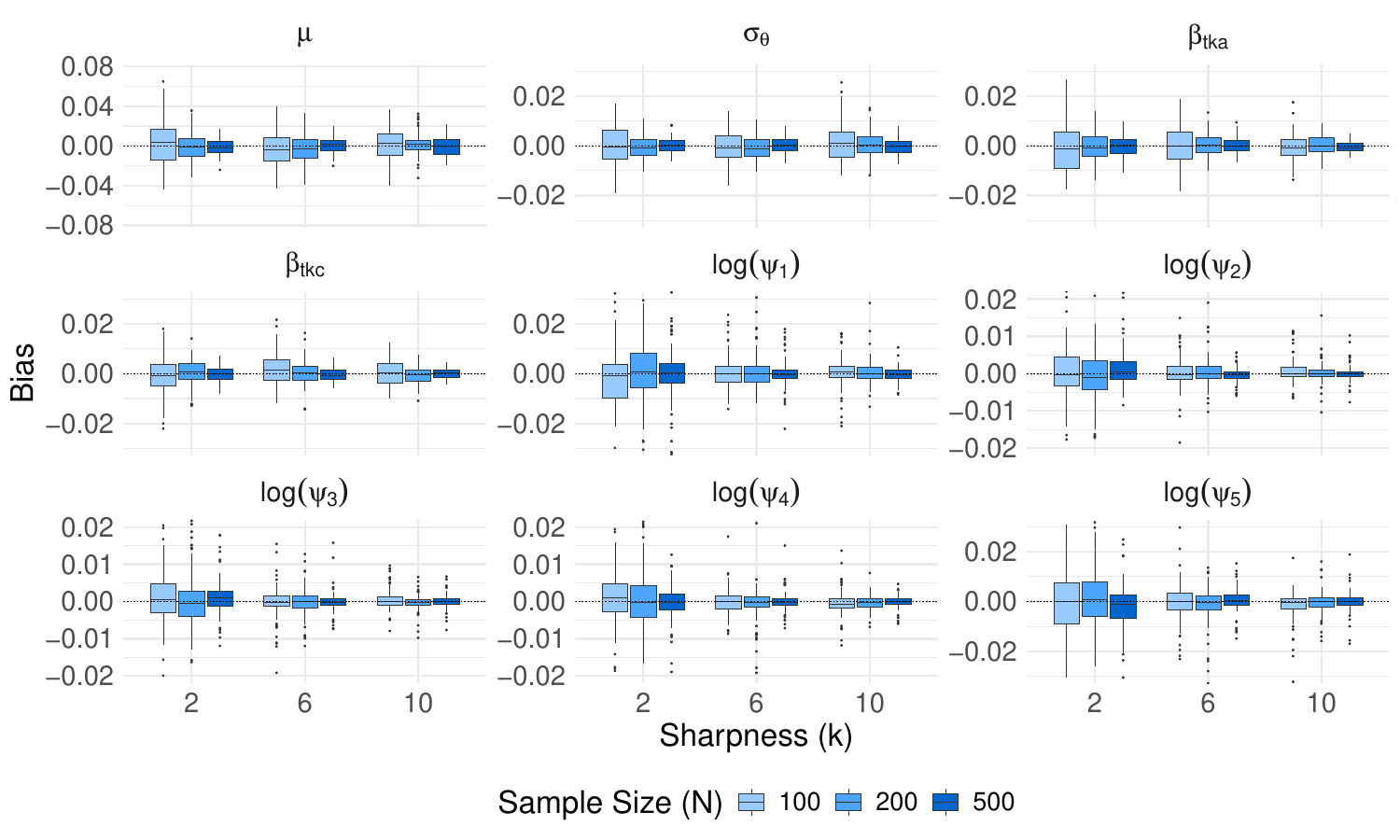}
    \caption{Bias plot of parameters for specimens in uniaxial tension configuration with $\sigma_\theta = 0.1$.}
    \label{fig:bias_plot_tension_0.1}
\end{figure}

\begin{figure}[h]
    \centering
    \includegraphics[width=\linewidth]{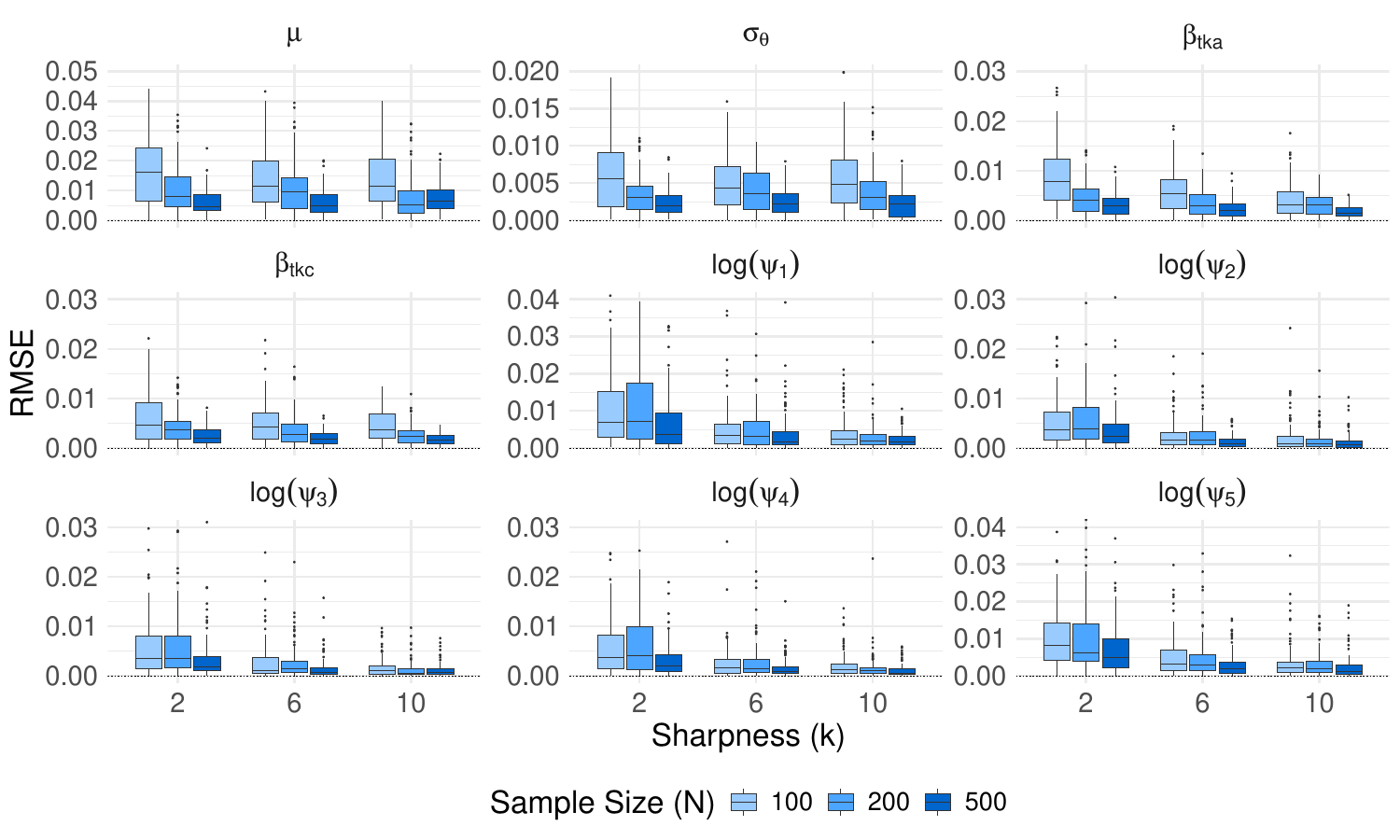}
    \caption{RMSE plot of parameters for specimens in uniaxial tension configuration with $\sigma_\theta = 0.1$.}
    \label{fig:rmse_plot_tension_0.1}
\end{figure}

\end{document}